\let\HEADER=\relax
\def\xmark{\ding{55}} 
\newcolumntype{C}{>{\centering\arraybackslash}X}
\patchcmd{\SetProgSty}{ArgSty}{ProgSty}{}{}
\crefname{@theorem}{Theorem}{Theorems}
\crefname{algocf}{Algorithm}{Algorithms}
\crefname{algorithm}{Algorithm}{Algorithms}
\crefname{algocf}{Algorithm}{Algorithms}
\crefname{algocfline}{Line}{Lines}
\def\itemrefrel#1#2{\stackrel{\text{\lipicsLabel{(\ref{#1})}}}{#2}}
\def\eqrefrel#1#2#3{\stackrel{\text{#2(\ref{#1})}}{#3}}
\def\itemref#1{\lipicsLabel{(\ref{#1})}}
\def\reasonrel#1#2{\stackrel{\text{#1}}{#2}}
\def\myparagraph#1{\smallskip\noindent\textbf{#1}}
\let\paragraph=\myparagraph
\def\e{\mathrm{e}}
\def\cpk{c^{△}_{k}}
\def\pfrac#1#2{\Big(\frac{#1}{#2}\Big)}
\def\op#1{\textsf{#1}}
\def\Sdec{S_{\textrm{dec}}}
\def\QN{Q_{\textrm{next}}}
\def\A{\mathcal{A}}
\def\SA{S_{\A}}
\def\BA{B_{\A}}
\def\P{\mathcal{P}}
\def\e{\mathrm{e}}
	\newcommand\lipicsLabel[1]{{\rmfamily #1}}
	\newcommand\lipicsLabel[1]{\textcolor{darkgray}{\sffamily\upshape\bfseries\mathversion{bold}#1}}
\newcommand{\OO}{\mathcal{O}}
    \title{Simple Set Sketching}
    \author{Jakob Bæk Tejs Houen}{BARC and University of Copenhagen}{jakn@di.ku.dk}{https://orcid.org/0000-0002-8033-2130}{Part of BARC, supported by the VILLUM Foundation grant 16582}
    \author{Rasmus Pagh}{BARC and University of Copenhagen}{pagh@di.ku.dk}{https://orcid.org/0000-0002-1516-9306}{Part of BARC, supported by the VILLUM Foundation grant 16582}
    \author{Stefan Walzer}{University of Cologne}{walzer@cs.uni-koeln.de}{https://orcid.org/0000-0002-6477-0106}{DFG grant WA 5025/1-1.}
    \authorrunning{J.\ B.\ T.\ Houen, R.\ Pagh, S.\ Walzer}
    \date{}
\begin{document}

\maketitle

\abstract{
    Imagine handling collisions in a hash table by storing, in each cell, the bit-wise exclusive-or of the set of keys hashing there. This appears to be a terrible idea: For $\alpha n$ keys and $n$ buckets, where $\alpha$ is constant, we expect that a constant fraction of the keys will be unrecoverable due to collisions.

    We show that if this collision resolution strategy is repeated \emph{three times} independently the situation reverses: If $\alpha$ is below a threshold of $≈0.81$ then we can recover the set of all inserted keys in linear time with high probability.

    Even though the description of our data structure is simple, its analysis is nontrivial. Our approach can be seen as a variant of the \emph{Invertible Bloom Filter} (IBF) of Eppstein and Goodrich. 
    While IBFs involve an explicit checksum per bucket to decide whether the bucket stores a single key, we exploit the idea of quotienting, namely that some bits of the key are implicit in the location where it is stored. We let those serve as an implicit checksum. These bits are not quite enough to ensure that no errors occur and the main technical challenge is to show that decoding can recover from these errors.
}

\section{Introduction}




\def\X{\mathcal{X}}
\def\Y{\mathcal{Y}}
Sketching is the idea of representing data in a compact, potentially lossy form. For this introduction imagine that, for some sets $\X$ and $\Y$, a long (typically sparse) sequence $X ∈ \X^u$ is represented via a short sequence $f(X) ∈ \Y^n$ --- the sketch of $X$ --- where $n \ll u$ and $f$ is a (possibly randomized) function.
We speak of \emph{linear sketching} when $(\X,⊕)$ and $(\Y,⊕)$ are groups and $f$ is a linear function, i.e.\ when $f(X⊕X') = f(X)⊕f(X')$ holds (component-wise) for all $X$, $X'$.

Linear sketches of data have appealing properties for applications in streaming or distributed settings~\cite{woodruff:datastreams:2014}.
In particular, such sketches can be merged/updated to form a sketch of the combined data.
This paper considers the case of $\X = \{0,1\}$, meaning the input $X ∈ \{0,1\}^u$ is conceptually a set $S ⊆ [u] := \{1,…,u\}$ of \emph{keys}. We assume that $u+1$ is a power of $2$.

We present a new extremely simple approach for linear sketching of sets. It uses $\Y = \{0,…,u\}$, hence an array $A ∈ \{0,…,u\}^n$ where $n$ is the selected size of the sketch, as well as independent hash functions $h_1,h_2,h_3: [u] \rightarrow [n]$. Given a sketch $A$ of $S ⊆ [u]$ we can add $x ∉ S$ to the sketch (i.e.\ obtain a sketch of $S ∪ \{x\}$) by setting $A[i] ← A[i] ⊕ x$ for $i=h_1(x), h_2(x), h_3(x)$, where $⊕$ denotes bit-wise exclusive-or.

This is indeed a linear sketch if addition in $\X = \{0,1\}$ and $\Y = \{0,…,u\}$ are both understood to be bit-wise exclusive-or.
Merging sketches of sets $S_1$ and $S_2$ will produce a sketch of the symmetric difference $S_1 △ S_2$.
As long as there is only one copy of each element in the sets represented by the sketches we merge, we get a sketch of the union.
We will see that from a sketch of a set $S$ with $n ≥ 1.23\, |S|$ we can recover $S$ with high probability in linear time.

A simple scenario where this is useful is that of \emph{set reconciliation}~\cite{MTZ:setreconciliation:2003}, where two parties, Alice and Bob, have sets $S₁$ and $S₂$ with a large overlap, and want to compute the union $S₁\cup S₂$.
If Alice computes a sketch of $S₁$ and sends it to Bob, he will be able to compute the sketch of $S_1 △ S_2$.
If $n ≥ 1.23\, |S_1 △ S_2|$ then Bob can recover $S_1 △ S_2$ and hence $S₁\cup S₂$ with high probability. Remarkably, the size $n$ of the sketches and hence the amount of information to be transferred is linear in $|S₁ △ S₂|$ rather than being linear in $|S₁ ∪ S₂|$ and therefore close to the information-theoretical lower bound, which holds even if Alice knows which of her elements Bob is missing.

There is a rich literature on streaming algorithms (see e.g.~the surveys~\cite{CJ:samplers:19,McGregor:graphstream:14,woodruff:datastreams:2014}).
Most streaming algorithms are linear sketches over the reals or integers, i.e.\ with $\X = ℤ$ or $\X = ℝ$.
Linear sketches over finite fields like considered in this paper are less well-studied, but are natural in some applications.
For example, consider ``straggler identification''~\cite{EG:StragglerIdentification:2011}, where there is a stream of events of the form $\text{enter}(x)$ and $\text{exit}(x)$, for elements $x\in [u]$ (e.g.\ think about employees entering and leaving a building, or locks being held in a database system).
We want to be able to keep track of which elements have an $\text{enter}(x)$ event without a matching $\text{exit}(x)$ event, assuming that the number of such elements is low (e.g.\ employees left in the building at the end of a working day).
Similarly, for the set reconciliation problem mentioned above, working with a sketch over the field of size two works just as well as working over the integers.

\subsection{Related work}
\label{sec:related}

\begin{figure}
\begin{center}
    \def\features#1{
        \foreach \f in {#1}{
            \makebox[2pt][c]{\textsf{\f}}
        }
    }
    \begin{tabular}{rccccc} 
     \toprule
     method & year & $\frac{\text{space}}{m}$ & $t_{\text{update}}$ & $t_{\text{decode}}$ & techniques\\ [0.5ex]
     \midrule
     randomized $k$-set structure~\cite{Ganguly:k-set:2007} &
        2005 & $O(\log m)$ & $O(\log m)$ & $𝒪(m\log m)$ & \features{A,R,--,--,M} \\ 
     deterministic $k$-set structure~\cite{GM:Deterministic:2008}  &
        2006 & $2$ & $2m$ & $\tilde{𝒪}(m^3)$ & \features{A,--,--,--,M} \\
     symmetric polynomials~\cite{EG:StragglerIdentification:2011} &
        2007 & $1$ & $m$ & $\tilde{𝒪}(m^2)$ & \features{A,--,--,--,--} \\
     IBF~\cite{EG:StragglerIdentification:2011} &
        2007 & $𝒪(\log(m))$ & $𝒪(\log(m))$ & $\tilde{𝒪}(m)$ & \features{--,R,P,C,M} \\
     IBLT ($k=3$)~\cite{GM:IBLT:2011} &
        2011 & 3.666 & 9 & $𝒪(m)$ & \features{--,R,P,C,M} \\
     \midrule
     $\langle$\emph{this paper, $k = 3$}$\rangle$ \phantom{[8]}&
        2023 & 1.222 & 3 & $𝒪(m)$ & \features{--,R,P,--,--} \\
     \bottomrule
    \end{tabular}
    \caption{Comparison of linear sketches for sets and multisets as discussed in \cref{sec:related}, normalised such that decoding is possible if the set size is at most $m$. The space-column counts how many entries have to be stored and $t_{\text{update}}$ counts how many entries are touched by insertions and deletions. The last column indicates which approaches use \textbf{A}lgebraic techniques, \textbf{R}andomisation, \textbf{P}eeling and \textbf{C}hecksums, and which approaches support \textbf{M}ultisets. All randomized sketches have a failure probability of $𝒪(1/m)$. }
    \label{fig:related-work}
\end{center}
\end{figure}

We summarise related work in \cref{fig:related-work}. Each of the listed competitors is a linear set sketch that stores a set $S$ of integers or elements of some finite field. The sketches support insertions and deletions of elements as well as a decode operation that can reproduce $S$ whenever $|S| ≤ m$ for some parameter~$m$. For simplicity we measure required space by counting how many numbers have to be stored, regardless of whether these are from $ℤ$, from a finite field or from $[𝒪(m)]$. Crucially, the space requirement of all sketches only depends on $m$, even though $|S|$ is unlimited. Note that even though decoding is impossible as long as $|S| > m$, it must become possible again if and when $|S| ≤ m$ holds again at some later point.

The following ideas are shared by several of the listed approaches.
\begin{description}
    •[Multisets.] Some approaches allow storing a multiplicity for each element in the set. Unsurprisingly, this tends to double the space requirement.
    •[Algebraic techniques.] A set $S ⊂ ℤ$ of size $m$ is uniquely determined by its power sums $(\sum_{x ∈ S} x^i)_{i ∈ [m]}$. This directly leads to a construction in \cite{EG:StragglerIdentification:2011} using symmetric polynomials and – less directly – to the $k$-set data structures in \cite{GM:Deterministic:2008}. These approaches work deterministically, but have relatively slow update and decode operations.
    •[Randomisation.] A rather primitive linear sketch of a set $S$ of group elements is the sum of the elements. Clearly, when $|S| ≤ 1$ and $S$ does not contain the neutral element then $S$ can be recovered from the sketch.
    
    All randomised approaches use a variant of such a primitive linear sketch in each of a large number of buckets. For each key hash functions select a small number of buckets in which the key is stored. During decoding the hope is that for every key $x$ at least one of its buckets stores no key other than $x$, so that $x$ can be recovered from this bucket.
    
    For randomised approaches decoding may fail even though $|S| ≤ m$. For better comparability we have tuned all competitors to have failure probabilities of $𝒪(1/m)$ in \cref{fig:related-work}.\footnote{If an IBF \cite{EG:StragglerIdentification:2011} is tuned for failure probability $ε$, then space and update times are correspondingly reduced to $𝒪(\log(1/ε))$.}
    
    Two further techniques are often combined with randomised approaches:
    \begin{description}
        •[Checksums.] The decoding algorithm has to decide whether a value $x$ stored in a bucket corresponds to the single key $x$ or to the sum of several keys overlapping in the bucket. Both invertible Bloom filters (IBFs) \cite{EG:StragglerIdentification:2011} and invertible Bloom lookup tables (IBLTs) \cite{GM:IBLT:2011} use explicit hash checksums in each bucket to make this decision. A sanity check proposed in \cite{EG:StragglerIdentification:2011} that is central to our approach is that $x$ can only be stored in a bucket $i$, if $i$ is one of the buckets selected for $x$ by the hash functions. This check can act as an implicit checksum.
        •[Peeling.] Suppose that only a subset $S' ⊂ S$ of the elements in the sketch are directly recoverable due to being alone in a bucket. However, after removing $S'$ from the sketch we obtain a sparser sketch where further elements may be recoverable. Peeling is the natural iterative decoding algorithm arising from the simple insight. It is used by IBFs (though not to its full potential), by IBLTs, and in this paper.

        Our technical contribution is related to the work of Jiang, Mitzenmacher, and Thaler~\cite{JMT:ParallelPeeling:2016}, which studies parallel algorithms for peeling processes such as the one used in IBLTs.
        They show that only $𝒪(\log\log n)$ rounds of peeling are needed in a “breadth-first” peeling approach, similar to the one we use.
    \end{description}
\end{description}

\subsection{Contribution}

We describe the \emph{simple set sketch}, a randomised dynamic set data structure in the spirit of the IBF~\cite{EG:StragglerIdentification:2011} and the IBLT~\cite{GM:IBLT:2011}. At any point in time the sketch represents a set $S ⊆ [u]$ where $u = 2^w-1$, i.e.\ keys are non-zero strings of $w$ bits. Initially $S = ∅$. A \op{toggle} operation can be used to change the membership status of a given key $x ∈ [u]$, meaning that \op{toggle}($x$) changes the represented set from $S$ to $S △ \{x\}$ where $△$ denotes the symmetric difference operator on sets. A \op{merge} operation takes another sketch representing a set $S'$ as input and changes the represented set from $S$ to $S △ S'$.

While no direct membership queries are supported, a \op{decode} operation tries to reconstruct the represented set $S$ in its entirety, and succeeds with high probability under certain conditions discussed below.

The construction uses an array $A = A[1,…,n]$ of $n$ buckets, each of which can store exactly one element of $\{0,…,u\}$, and a constant number $k ≥ 3$ of uniformly random hash functions $h₁,…,h_k : [u] → [n]$. We define $h(x) := \{h₁(x),…,h_k(x)\}$ as a multiset of size exactly $k$, noting that $h(x)$ is an ordinary set with probability $1-𝒪(1/n)$.\footnote{We could have forced $h₁,…,h_k$ to always produce distinct hashes and avoid multisets. However, then $h₁(x),…,h_k(x)$ would not be stochastically independent. So both choices involve mildly annoying (but ultimately inconsequential) technicalities.} The operations are implemented as shown in \cref{fig:algorithms}.

\begin{figure}
    \ifdefined\SIAMmacros\else
    \ \hspace{-2em}
    \fi
    \begin{minipage}{0.5\textwidth}
        \begin{algorithm}[H]
            \algo{initialise}{
                $A[1,…,n] = (0,…,0)$\;
            }
        \end{algorithm}
        \begin{algorithm}[H]
            \algo{toggle($x$)}{
              \For{$i ∈ h(x)$}{
                $A[i] ← A[i] ⊕ x$\;
              }
            }
        \end{algorithm}
        \begin{algorithm}[H]
            \algo{merge($A' ∈ \{0,…,u\}^n$)}{
                \For{$i ∈ [n]$}{
                    $A[i] ← A[i] ⊕ A'[i]$\;
                }
            }
        \end{algorithm}
        \begin{algorithm}[H]
            \algo{looksPure($i ∈ [n]$)}{
                \Return $A[i] ≠ 0 ∧ i ∈ h(A[i])$\;
            }
        \end{algorithm}
    \end{minipage}%
    \ifdefined\SIAMmacros%
        \hspace{-3em}
    \fi%
    \hspace{-3em}%
    \begin{minipage}{0.7\textwidth}
        \begin{algorithm}[H]
            \SetKw{Not}{not}
            \SetKw{ManualIf}{if}
            \algo{decode}{
                $\Sdec ← ∅$\;
                $Q ← \{i ∈ [n] \mid \op{looksPure}(i)\}$\;
                \While{$Q ≠ ∅$}{
                    $\QN ← ∅$\;
                    \For{$i ∈ Q$ \ManualIf \op{\upshape looksPure}$(i)$}{
                        $x ← A[i]$ \tcp{\emph{detected} key $x$} 
                        $\op{toggle}(x)$ \tcp{$S ← S △ \{x\}$}
                        $\Sdec ← \Sdec △ \{x\}$\;  
                        $\QN ← \QN ∪ \{i ∈ h(x) \mid \op{looksPure}(i)\}$\;
                    }
                    $Q ← \QN$
                }
                \If{$A[1,…,n] ≠ (0,…,0)$}{
                    \Return \textsc{failure}\strut
                }
                \Return $\Sdec$ \tcp{correct whp}
            }
        \end{algorithm}
    \end{minipage}
    \caption{Implementation of simple set sketches.}
    \label{fig:algorithms}
\end{figure}

\begin{figure}
\def\idx#1{\multicolumn{1}{c}{\scriptsize\textcolor{gray}{#1}}}
\begin{align*}
A &: \raisebox{6pt}{\begin{tabularx}{105mm}{|C|C|C|C|C|C|C|C|}
    \idx{1}&\idx{2}&\idx{3}&\idx{4}&\idx{5}&\idx{6}&\idx{7}&\idx{8}\\[-2pt]
    \hline
    $ x⊕z $ & $ $ & $ x⊕y $ & $ \bm{y} $ & $ $ & $ \bm{x⊕z} $ & $ y⊕z $ & $ $ \\    
    \hline
\end{tabularx}}
\begin{tabularx}{20mm}{l}
    $Q=\{4,6\}$\\
    $S_\text{dec}=\{y,x⊕z\}$
\end{tabularx}
\vspace{2mm}\\
A' &: {}\raisebox{6pt}{\begin{tabularx}{105mm}{|C|C|C|C|C|C|C|C|}
    \idx{1}&\idx{2}&\idx{3}&\idx{4}&\idx{5}&\idx{6}&\idx{7}&\idx{8}\\[-2pt]
    \hline
    $ x⊕z $ & $ $ & $ z $ & $  $ & $ $ & $ $ & $ \bm{z} $ & $ \bm{x⊕z} $ \\    
    \hline
\end{tabularx}}
\begin{tabularx}{20mm}{l}
    $Q=\{7,8\}$\\
    $S_\text{dec}=\{y,z\}$
\end{tabularx}
\vspace{2mm}\\
A'' &: {}\raisebox{6pt}{\begin{tabularx}{105mm}{|C|C|C|C|C|C|C|C|}
    \idx{1}&\idx{2}&\idx{3}&\idx{4}&\idx{5}&\idx{6}&\idx{7}&\idx{8}\\[-2pt]
    \hline
    $ \bm{x} $ & $ $ & $ \bm{x} $ & $ $ & $ $ & $ \bm{x} $ & $ $ & $ $ \\    
    \hline
 \end{tabularx}}
 \begin{tabularx}{20mm}{l}
    $Q=\{1,3,6\}$\\
    $S_\text{dec}=\{x,y,z\}$
\end{tabularx}
\end{align*}    
\caption{Example of simple set sketch decoding. The sketch $A$ stores $S = \{x,y,z\}$, assuming $h(x) = \{1,3,6\}, h(y) = \{3,4,7\}, h(z) = \{1,6,7\}$. Moreover, assume $h(x⊕z) = \{3,6,8\}$.
Only $y$ is alone in bucket $4$, but bucket $6$ with the foreign key $x⊕z$ also looks pure, without actually being pure. 
In the first round we therefore have $Q=\{4,6\}$ and toggle $y$ and $x⊕z$, resulting in $A'$.
In the second round buckets $Q=\{7,8\}$ look pure, with $z$ and $x⊕z$, so we toggle these keys and update the set of decoded keys to $S_\text{dec}=\{y,x⊕z\}△\{z,x⊕z\} = \{y,z\}$. 
In the third and final round the remaining key $x$ is recovered from $A''$.}
\label{fig:example}
\end{figure}

The \op{toggle}-operations are commutative and two \op{toggle}$(x)$ operations with the same $x$ cancel. Hence, the state of the data structure is a function of $h$ and the currently stored set $S$. Since $A$ can assume at most $u^n$ states while there are $2^u$ possibilities for $S$, the representation is necessarily “lossy” when $S$ is large and $n$ is small. Like regular IBLTs \cite{GM:IBLT:2011}, \op{decode} relies on \emph{peeling}, meaning we attempt to identify buckets $i$ such that $A[i]$ is the trivial sum of just one key $x$ and hence $A[i] = x$. We call such buckets \emph{pure}. If detected, the key $x$ is toggled – which removes it from the sketch – and $x$ is recorded in the set $\Sdec$ to be returned in the end. A fully successful decode will leave the sketch empty.

To decide whether a bucket $i$ is pure and stores a single key $A[i] = x$ or whether it stores a sum $A[i] = x_1 \oplus \ldots \oplus x_\ell$ of several keys, the \op{looksPure} function checks whether $A[i]$ hashes to $i$, i.e.\ whether $i \in h(A[i])$.
This exploits that when $A[i]$ is single key then $i \in h(A[i])$ always holds, while if $A[i]$ is the sum of several keys then $i \in h(A[i])$ is a coincidence, albeit one that does occur, as we will show later, an expected constant number of times overall.
We leave \op{decode} oblivious of the issue of such \emph{anomalies} and let it trust the output of \op{looksPure}.
That way, it will sometimes erroneously detect a \emph{foreign key}, $z$, that is not actually in the set\footnote{Our notion of a foreign key has nothing to do with the notion of the same name used in the database literature.}.
The algorithm will try to remove~$z$ by calling \op{toggle}$(z)$, but since $z$ is not in the set, this will end up \emph{adding} $z$ to the data structure.
If in the long run the ordinary decoding steps outnumber the anomalous decoding steps, i.e.\ when more keys are removed than added, then $z$ will likely be isolated in a bucket at a later point.
At this point, $z$ will be toggled a second time, this time amounting to an actual removal from the sketch and from~$\Sdec$.
This allows \op{decode} to rectify prior mistakes and return the correct set with high probability.
The implementation uses two nested loops and we call an iteration of the outer loop a \emph{round}. An example for the execution of \op{decode} is given in \cref{fig:example}.
The main technical challenge will be to control the number and properties of anomalous decoding steps so that a successful recovery from the corresponding mistakes occurs with high probability.

In the following theorem the constant $c_k^△$ is known as the \emph{peeling threshold} or the threshold for the occurrence of a $2$-core in a random $k$-uniform hypergraph. The largest, and hence most interesting of these values is $c₃^△ ≈ 0.81$, relevant for $k = 3$ hash functions.
\def\ckp{c_k^{△}}
\begin{theorem}
    \label{thm:IBF-w/o-checksums}
    Assume we have a sketch as explained above with $n$ buckets and $k ≥ 3$ hash functions representing a set $S₀$ of $m$ keys where $\frac{m}{n} < \ckp - ε$ for some $ε > 0$. Then \op{decode} returns $S₀$ in time~$𝒪(n)$ with high probability (whp, meaning with probability $1-\tilde{O}(1/n)$).
\end{theorem}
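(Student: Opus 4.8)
The plan is to analyze \op{decode} by separating the *idealized* peeling process (which ignores anomalies) from the *actual* process, and bounding the divergence between them. First I would set up the standard hypergraph model: let $H$ be the $k$-uniform hypergraph on vertex set $[n]$ with one edge $h(x)$ per key $x \in S_0$. Since $\frac{m}{n} < \ckp - \varepsilon$, a classical result (Molloy; Cooper; etc.) says that with probability $1 - \tilde{O}(1/n)$ the $2$-core of $H$ is empty, so the pure peeling process — repeatedly removing keys that are alone in some bucket — terminates with the sketch empty, and moreover does so in $O(\log n)$ rounds (indeed $O(\log\log n)$ rounds of breadth-first peeling, as in \cite{JMT:ParallelPeeling:2016}). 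This is the behavior \op{decode} would exhibit if \op{looksPure} never lied. I would also record the structural fact used repeatedly: after removing any sub(multi)set of keys, a bucket $i$ is genuinely pure iff exactly one surviving key hashes to $i$.

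Next I would control the anomalies. Call a pair $(z, i)$ an \emph{anomaly candidate} if $z = x_1 \oplus \dots \oplus x_\ell$ for distinct keys $x_1, \dots, x_\ell \in S_0$ with $\ell \ge 2$, all hashing to bucket $i$, and $i \in h(z)$ — i.e.\ a configuration where \op{looksPure}$(i)$ could fire on the foreign key $z$. The key estimate is that the expected number of such candidates is $O(1)$: fixing $i$ and $\ell$, the probability that $\ell$ specified keys all hit $i$ is $(k/n)^\ell$ up to lower-order terms, there are $\binom{m}{\ell}$ choices of keys, $n$ choices of $i$, and the event $i \in h(z)$ contributes a further factor $O(k/n)$ since $z$ is (whp) not among $x_1,\dots,x_\ell$ and its hash is independent; summing the resulting $\sum_\ell n \binom{m}{\ell}(k/n)^{\ell+1} = O(\sum_\ell (km/n)^\ell \cdot k/n)$ geometric series gives $O(1)$ because $km/n$ — wait, this needs care since $km/n$ can exceed $1$; the correct bound localizes $\ell$ to $O(\log n)$ via the high-probability bound on the size of the largest "relevant" key cluster, and one shows the dominant contribution comes from small $\ell$, yielding an expected $O(1)$ anomalies and, via Markov plus a union bound, $\tilde O(1)$ anomalies whp. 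I would then argue each anomaly injects at most one foreign key $z$ into the sketch, increasing the number of "live" keys by one; so throughout the execution the multiset of keys present in the sketch differs from $S_0$ minus the decoded-and-confirmed keys by at most $\tilde O(1)$ foreign keys.

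The heart of the argument is a potential/coupling lemma showing the actual process still drains the sketch. I would define the potential as the number of keys currently in the sketch (real or foreign), and show: (i) every non-anomalous round makes the same progress as idealized peeling on the current hypergraph, which — because removing $O(1)$ extra edges and adding $O(1)$ foreign edges to a hypergraph below the $2$-core threshold keeps it below threshold (robustness of the threshold under $\tilde O(1)$ edge perturbations, provable by a union bound over which edges are perturbed) — is again a peelable instance; (ii) a foreign key $z$, once injected, hashes to $k$ buckets whose contents are being steadily emptied, so whp $z$ becomes isolated and is toggled off (and removed from $\Sdec$) before peeling stalls; (iii) since anomalies and re-peelings are each $\tilde O(1)$ in number and the peeling of any sub-instance takes $O(\log n)$ rounds, the total number of rounds is $O(\log n)$ whp. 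Combining (i)–(iii), when the outer loop exits the sketch is empty, which forces $\Sdec = S_0$ (each toggled key is in $\Sdec$ with the right parity, and an empty sketch means the symmetric difference of all toggles equals $S_0$, while the foreign keys were each toggled an even number of times). The running time is $O(n)$ because each bucket is touched $O(1)$ times in expectation during the $O(\log n)$ rounds — more precisely, total work is proportional to the number of (key, bucket) incidences processed, which is $O(kn)$ plus the $\tilde O(1)$ anomalous overhead.

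The main obstacle I anticipate is step (ii) together with the robustness claim in (i): one must rule out a bad interaction where an injected foreign key $z$ itself participates in — or creates — a new small $2$-core together with the residual real keys, stalling the process permanently. Handling this cleanly requires showing that the foreign keys behave like fresh random edges (their hashes *are* independent of the hashes of $x_1,\dots,x_\ell$ given $z \notin \{x_1,\dots,x_\ell\}$) and that a random $k$-uniform hypergraph below threshold remains below threshold after $\tilde O(1)$ random edge insertions and deletions — a statement that is intuitively clear but needs a careful union bound over the $\binom{m}{\tilde O(1)}$ perturbation patterns, or better, a direct argument that the $2$-core size is Lipschitz-stable under such perturbations. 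I expect this to be where most of the technical work in the paper's proof actually lives.
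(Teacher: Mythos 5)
Your high-level plan (anomalies are rare; otherwise the process is ordinary peeling below the threshold $\ckp$) matches the paper's strategy, and your expected-$O(1)$ count of anomaly configurations is essentially the paper's first union bound (there the convergence comes from the $1/(\ell-1)!$ factor in $\binom{m}{\ell-1}$, not from localizing $\ell$ to $O(\log n)$, so the ``$km/n<1$'' worry you raise is a non-issue). However, the two steps you yourself flag as the anticipated difficulty are exactly the content of the paper's proof, and your proposal does not supply them. First, you assert that each anomaly ``injects at most one foreign key'' and that an injected foreign key ``becomes isolated and is toggled off before peeling stalls.'' Neither is automatic: a triggered anomaly can fire repeatedly (remove $z$, the centre looks pure again, $z$ is re-added), and the paper's own failure example shows that with a LIFO processing order the algorithm really can loop forever. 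The argument must therefore use the breadth-first round structure explicitly; the paper does this via a round-indexed coupling (its Lemma on $S_r\cap S_0\subseteq S_r'$), a bound of one anomalous step per anomaly per round, and a final ``majority'' argument that the $k-1$ genuinely pure buckets of each anomalous key beat the single deceptive centre, clearing anomalies in two extra rounds. Your proposal never engages with the re-adding dynamics or the role of rounds. Second, you also assume (your cascade concern) that foreign keys do not trigger further anomalies; the paper needs a separate union bound (over pairs of interacting anomalies) to show that only \emph{native} anomalies, which are star-shaped and pairwise disjoint, are ever triggered. Without this, your bookkeeping ``the sketch differs from the ideal by $\tilde O(1)$ foreign keys'' is unproven.

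Your proposed substitute for the paper's machinery --- ``a random hypergraph below threshold stays peelable after $\tilde O(1)$ edge insertions/deletions'' --- is also not sufficient as stated, because the obstruction is not a perturbed edge set but a decoder whose purity test can be fooled at specific buckets at arbitrary times; the dynamics are adaptive, not a one-shot perturbation. The paper sidesteps this by analysing a crippled variant \op{decode'} that bans the $\tilde O(1)$ anomalous buckets $\BA$ outright, adapting Molloy's round lemma to tolerate the banned buckets, and then adding a counting lemma showing that even near the end (when only $O(|\BA|)$ non-anomalous keys remain) some non-banned bucket of degree one always exists. Some argument of this kind (or a genuinely worked-out Lipschitz-stability statement for the 2-core together with a proof that the adaptive errors can be charged to it) is needed; as written, your steps (i)--(ii) are claims rather than proofs, so the proposal has a real gap at the heart of the theorem.
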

\noindent
We remark that the error probability $Θ(1/n)$ accounts for three ways in which \op{decode} can fail to return the correct set.
\begin{enumerate}[(1)]
    • \op{decode} may return \textsc{failure}. This is a likely outcome when two keys $x,y ∈ S₀$ satisfy $h(x) = h(y)$, i.e.\ when they share all $3$ hash values. Such keys exist with probability $Θ(1/n)$.
    • \op{decode} may fail to terminate. Assume for instance that $S₀ = \{1,2\}$ with $h(1) = h(2) = \{a,b,c\}$ and $h(3) = \{c,d,e\}$ for some distinct buckets $a,b,c,d,e ∈ [n]$. The algorithm would erroneously select bucket $c$ for decoding since $A[c] = 1⊕2=3$ and $c ∈ h(3)$ – hence \op{looksPure}($c$) is satisfied. This leads to key $3$ being added to the sketch. Afterwards $3$ is correctly detected to be the only key stored in bucket $d$ (or $e$) and toggled a second time, bringing us back to the state we started in.
    A similar constellation of keys exists with probability $Ω(1/n²)$ in any set of $Ω(n)$ keys.
    • \op{decode} may return a set $\Sdec$ with $\Sdec ≠ S₀$. Assume for instance that for $S₀ = \{1,2,3\}$ we have $h(1) = h(2) = h(3)$, which happens with probability $Ω(1/n⁶)$. We then get $\Sdec = ∅$ since the contributions $1⊕2⊕3 = 0$ cancel out everywhere.
\end{enumerate}
The second and third failure cases are more problematic than the first. A practical implementation can prevent (2) by terminating the algorithm with “\textsc{failure}” when it runs unexpectedly long. Moreover, it can reduce the probability of (3) to $2^{-r}$ by introducing a corresponding $r$-bit checksum, i.e.\ maintaining $C = \bigoplus_{x ∈ S} f(x)$ together with the sketch where $f : [u] → \{0,1\}^r$ is a random hash function. Note that $C$ is much more light-weight than the per-bucket checksums used in \cite{EGUV:WhatsTheDifference:2011}.

\subsection{Technical Overview}

From a high level, the analysis has four parts in corresponding subsections.

\paragraph{\ref{sec:def-anomaly} The Issue of Anomalies.}
We connect the “runtime” phenomenon of anomalous decoding steps to the “offline” combinatorial structure of \emph{anomalies}. An anomaly is a set of keys $A = \{x₁,…,x_ℓ\} ⊆ [u]$ with $x₁⊕…⊕x_ℓ = 0$ and a shared bucket $i ∈ h(x₁) ∩ … ∩ h(x_ℓ)$. The presence of any $ℓ-1$ keys from $A$ are, as far as the \emph{centre} bucket $i$ of $A$ is concerned, indistinguishable from the presence of the missing $ℓ$th key from $A$. This may cause $i$ to \op{lookPure}, causing the missing key to be toggled and effectively added to the sketch. Every anomalous decoding step has such an underlying anomaly.

\paragraph{\ref{sec:isolating-anomalies} Isolating Anomalies.}
An anomaly $A$ becomes relevant at runtime, as soon as $|A|-1$ of its keys are present in the sketch. Initially only anomalies with at most one foreign key $x ∈ A \setminus S₀$ are relevant in this sense. We call such anomalies \emph{native anomalies}. However, since native anomalies can cause foreign keys to be added to the sketch, anomalies with two or more foreign keys can become relevant as well, causing additional foreign keys to be added in an escalating cascade.

We show that no such cascade occurs whp. In fact, we show that only $𝒪(1)$ native anomalies exist in expectation (and $𝒪(\log n)$ whp) and that these take only the most harmless of forms with no mutual interaction. Concretely, native anomalies have a “star-shape”, i.e.\ keys share only the centre bucket (formally $|h(A)| = (k-1)|A|+1$) and any two native anomalies have disjoint domains ($h(A₁) ∩ h(A₂) = ∅$).

\paragraph{\ref{sec:working-around-anomalies} Working Around Anomalies.}
Keys that are part of anomalies may be repeatedly toggled by decode, i.e.\ inserted and deleted many times. To obtain a clearer view on the lasting progress that is made, we consider a variant of decode where the dizzying commotion around anomalies is artificially frozen. More precisely, we let $\SA$ be the set of \emph{anomalous keys}, that is, the keys contained in native anomalies, and $\BA = h(\SA)$ the set of \emph{anomalous buckets}.
We then consider a variant \op{decode'} of \op{decode} that is given $\BA$ as an input and is banned from considering these buckets.

With the issue of anomalies out of the picture, \op{decode'} can be analysed with known techniques, which we postpone to \cref{sec:main-lemma}. There we show that all buckets, except for those in $\BA$, are cleared of keys whp. While \op{decode} may (repeatedly) remove and add keys disregarded by \op{decode'}, we show that any key that is removed by \op{decode'} is also permanently removed by \op{decode}. From this we conclude that \op{decode} must reach a state where only anomalous keys are left. It is then not hard to see that these anomalous keys cannot survive in isolation. For each anomaly $A$ and each remaining $x ∈ A$ there are $k-1$ pure buckets only containing $x$, compared to only a single bucket (the centre of $A$) that could look pure without actually being pure. With such a majority of helpful over deceptive information, what is left of the anomaly will unravel within two rounds at most.

\paragraph{\ref{sec:main-lemma} Analysis of \op{decode'}.}
We adapt the analysis of cores in hypergraphs by Molloy~\cite{Molloy05:Cores-in-random-hypergraphs} to our setting with anomalous buckets.
A crucial lemma by Molloy~\cite[Lemma 3]{Molloy05:Cores-in-random-hypergraphs} shows that only a constant fraction of hyperedges remain whp after a constant number of rounds when peeling a fully random $k$-uniform hypergraph.
In our setting, this corresponds to only a constant fraction of the keys remaining after a constant number of iterations of the outer loop of \op{decode} if we have perfect information on which buckets are pure.
We show that since there are only $\OO(\log^2 n)$ anomalous buckets whp, which we block from consideration in \op{decode'}, their effect on the peeling process cannot be too large, and we still obtain that only a constant fraction of the keys remain after a constant number of iterations of the outer loop of \op{decode'}.

We then employ a standard argument to show that if we have fewer than $\delta n$ keys then at least a constant faction of these keys are isolated in a bucket whp.
Now if we have $n'$ isolated keys then we have at least $n' - |B_{\mathcal{A}}|$ buckets that are detected as pure by \op{decode'}.
This shows that \op{decode'} will arrive at a point where at most $\Omega(|\BA|)$ keys from $S \setminus \SA$ are left.
Finally, we need to argue that the last non-anomalous keys are also removed by \op{decode'}, which is done by a technical counting lemma.

\section{Analysis of the Decode Operation}

\subsection{The Issue of Anomalies}
\label{sec:def-anomaly}

We begin by introducing concepts that will come in handy in the subsequent analysis.

We denote by $S₀ ⊆ [u]$ the set stored in the sketch before the \op{decode} operation is executed.
When discussing states of the sketch while \op{decode} is in progress, $S$ refers to the set of keys currently stored in the sketch and $\Sdec$ refers to the current state of the corresponding variable.
Both $S$ and $\Sdec$ may contain native keys, i.e.\ keys from $S₀$, as well as foreign keys, i.e.\ keys from $[u] \setminus S₀$.
Since changes to $S$ and $\Sdec$ happen in sync, $S₀ = \Sdec △ S$ is an invariant of \op{decode}. It implies successful termination if and only if $S = ∅$ is reached.

Each iteration of the while loop carries out a \emph{round} and each iteration of the for-loop where $i$ \emph{looks pure} (i.e.\ \op{looksPure}($i$) holds) carries out a \emph{step} at bucket $i$. We say the key $x = A[i]$ seemingly stored in bucket $i$ is \emph{detected} and toggled. If we in fact had $x ∈ S$, then $S$ loses an element and we speak of a \emph{regular step}, otherwise $S$ gains an element and we speak of an \emph{anomalous step}.

\paragraph{Anomalies.} An anomalous step occurs when bucket $i$ stores several elements $x₁,…,x_{ℓ-1} ∈ [u] \setminus \{x\}$ with $x₁⊕…⊕x_{ℓ-1} = x$. An anomalous step is always linked to an \emph{anomaly} of size $ℓ$.
\begin{definition}
    A set $A = \{x₁,…,x_ℓ\} ⊆ [u]$ with $\bigoplus_{j ∈ [ℓ]} x_j = 0$ and $i ∈ h(x_j)$ for all $j ∈ [ℓ]$ is an anomaly of size $ℓ$ with centre $i ∈ [n]$.\footnote{More precisely: $h(x_j)$ should contain $i$ an odd number of times.}
\end{definition}
An anomaly $A$ of size $ℓ$ is \emph{triggered} if exactly $ℓ-1$ of its keys are stored in the sketch, i.e.\ if $A ∩ S = A \setminus \{x\}$ for some $x ∈ A$, and no other key is stored in the centre bucket $i$. It then appears as though only $x$ is stored in bucket $i$, i.e.\ $i$ looks pure. An anomalous step may then detect key $x$ in $i$ and add $x$ to $S$. Note that $x$ may be native or foreign.

\paragraph{Native anomalies.}
Call an anomaly $A$ a \emph{native anomaly} if it contains at most one foreign key. A native anomaly may already be triggered when decoding starts, or can be triggered simply by removing keys stored in the centre of $A$. In principle, a \emph{foreign anomaly}, i.e.\ an anomaly containing at least two foreign keys, can be triggered, provided that at least one of its keys is added to $S$ during decoding due to different anomalies that are triggered prior to $A$. A non-trivial step in our argument is to show that only native anomalies are triggered whp.

\paragraph{Breadth first decoding.}
It may seem puzzling how decoding could reliably recover from a state where $A ⊆ S$ for some anomaly $A$. Assume the centre of $A$ stores exactly the keys from $A$ and consider the next time a key $x ∈ A$ is removed from $S$. Then $A$ is triggered and it will then appear as though $x = \bigoplus_{x' ∈ A \setminus\{x\}} x'$ is stored in bucket $i$. Since $i$ looks pure, $x$ may be detected at $i$ and hence promptly readded to $S$.
This would indeed be a fatal problem if \op{decode} would maintain the set of buckets to be processed (i.e.\ those that look pure) in a LIFO queue. Instead, \op{decode} proceeds in rounds and a bucket that attains the \op{looksPure} status is only considered in the next round after all buckets that looked pure at the beginning of the round have been processed. Such a “breadth first” way of considering buckets allows for useful work to be done (including the removal of further keys from $A$) before the centre bucket $i$ is considered.

\subsection{Isolating Anomalies}
\label{sec:isolating-anomalies}

Let $\A$ be the set of all native anomalies. In the following we prove that only anomalies from $\A$ are triggered during decoding, that those anomalies have canonical properties and do not interact. This will involve several union bound arguments that are similar to each other in structure. As a warm-up we bound $𝔼[\A]$.


Let us be precise about how a native anomaly arises from the underlying family $(h_j(x))_{x ∈ [u], j ∈ [k]}$ of independent random variables. For any $ℓ ≥ 3$, any set $\{x₁,…,x_{ℓ-1}\} ⊆ S₀$ and any sequence $j₁,…,j_ℓ ∈ [k]$ we call $(\{x₁,…,x_{ℓ-1}\},j₁,…,j_ℓ)$ an \emph{anomaly blueprint}. This blueprint is \emph{realised} if $h_{j₁}(x₁) = … = h_{j_ℓ}(x_ℓ)$ where $x_ℓ := x₁⊕…⊕x_{ℓ-1}$. In that case $A = \{x₁,…,x_ℓ\}$ is a native anomaly. Conversely, every native anomaly realises at least one blueprint (a native anomaly with no foreign key realises at least $ℓ$ blue prints, corresponding to its subsets of size $ℓ-1$). Thus $|\A|$ is at most the number of realised blueprints.
There are $\binom{m}{ℓ-1}k^ℓ$ blueprints with parameter $ℓ$ and each is realised with probability exactly $n^{-ℓ+1}$.
Let $\P$ be the set of all anomaly blueprints and let $E_{P}$ for $P ∈ \P$ be the event that blueprint $P$ is realised. Recall that in the context we are operating we have $c := \frac{m}{n} < \cpk-ε < 1$. We can compute
\begin{align}
    𝔼[|\A|]
    &≤ 𝔼[\,|\{P ∈ \P \mid P \text{ is realised}\}|\,]
     = \sum_{P ∈ \P} \Pr[E_P]
     = \sum_{ℓ ≥ 3} \binom{m}{ℓ-1} k^ℓ · n^{-ℓ+1}\notag\\
    &≤ \sum_{ℓ ≥ 3} \frac{m^{ℓ-1}}{(ℓ-1)!}k^ℓ n^{-ℓ+1}
     = k\sum_{ℓ ≥ 3} \frac{(ck)^{ℓ-1}}{(ℓ-1)!}
     ≤ k\sum_{ℓ ≥ 0} \frac{(ck)^ℓ}{ℓ!} = k\e^{ck} = 𝒪(1). \label{eq:basic-union-bound}
\end{align}
We now show that whp no anomaly $A ∈ \A$ is \lipicsLabel{(ii)} too large, \lipicsLabel{(iii)} contains keys sharing a bucket other than the centre or \lipicsLabel{(iv)} intersects other anomalies in $\A$. We use the notation $h(A) := \bigcup_{x ∈ A} h(x)$.
\begin{lemma}
    \label{lem:canonical-anomalies}
    The following holds whp. 
    \begin{enumerate}[(i)]
        • \label{it:bucket-degree} $∀i ∈ [n]\colon |\{ x ∈ S₀ \mid i ∈ h(x)\}| ≤ \log n$.
        • \label{it:anomaly-size} $∀A ∈ \A\colon |A| ≤ \log n$.
        • \label{it:anomaly-footprint} $∀A ∈ \A\colon |h(A)| = (k-1)|A|+1$.
        • \label{it:anomaly-intersection} $∀A₁ ≠ A₂ ∈ \A\colon h(A₁) ∩ h(A₂) = ∅$.
    \end{enumerate}
\end{lemma}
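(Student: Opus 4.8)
The plan is to prove each of the four items by a union bound over the relevant combinatorial objects, reusing the accounting from \eqref{eq:basic-union-bound}. Throughout, the key quantitative fact is that a collection of $t$ prescribed hash values being forced to coincide happens with probability $n^{-(t-1)}$, and that the number of ways to choose the keys involved is polynomial in $m = \Theta(n)$, so as long as the ``coincidence exponent'' exceeds the ``choice exponent'' by at least roughly $2$, the expected count is $\tilde O(1/n)$ and Markov finishes the job.

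\textbf{Item \eqref{it:bucket-degree}.} Fix a bucket $i$ and a set of $t := \lceil \log n\rceil + 1$ distinct keys from $S₀$; for each of them at least one of its $k$ hash values must equal $i$, which happens with probability at most $(k/n)^t$ after a union bound over which hash index hits $i$. Summing over the at most $n \binom{m}{t} \le n\, m^t$ choices of $(i, \text{key-set})$ gives a bound of $n \cdot m^t (k/n)^t = n (ck)^t$, which is $n^{-\Omega(\log n)} = \tilde O(1/n)$ since $ck < 1$ is a constant bounded away from $1$. (One must take a tiny bit of care that the events ``$i \in h(x)$'' are independent across the $t$ keys, which they are since distinct keys use disjoint blocks of the family $(h_j(x))$.)

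\textbf{Item \eqref{it:anomaly-size}.} Every anomaly of size $\ell$ realises a blueprint with parameter $\ell$, and by \eqref{eq:basic-union-bound} the expected number of realised blueprints with parameter $\ell$ is $\binom{m}{\ell-1}k^\ell n^{-\ell+1} \le k (ck)^{\ell-1}$. Summing this over $\ell > \log n$ gives a geometric tail $\sum_{\ell > \log n} k(ck)^{\ell-1} = O((ck)^{\log n}) = \tilde O(1/n)$, so whp no anomaly of size exceeding $\log n$ is realised; in particular no such anomaly lies in $\A$.

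\textbf{Items \eqref{it:anomaly-footprint} and \eqref{it:anomaly-intersection} -- the main obstacle.} For a star-shaped anomaly of size $\ell$ one has exactly one shared bucket (the centre, hit by one hash index of each key) and $(k-1)\ell$ further hash values that are all distinct, giving $|h(A)| = (k-1)\ell+1$; any anomaly in $\A$ with $|h(A)| < (k-1)\ell+1$ has an ``extra coincidence'' beyond the $\ell-1$ coincidences already needed to make it an anomaly. The plan is to enumerate such a ``degenerate'' configuration by an \emph{extended blueprint}: the $\ell-1$ free keys, the $2\ell$ (or so) hash indices, a partition of the forced $(k-1)\ell+1$ hash slots into equality classes with at least one merge beyond the centre class, and observe that having $c$ equality classes among $(k-1)\ell+1$ slots forces $(k-1)\ell+1-c$ coincidences; a non-trivial merge means $c \le (k-1)\ell$, i.e.\ at least $\ell$ coincidences -- one more than the $\ell-1$ that an anomaly always has. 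Hence each such configuration is realised with probability at most $n^{-\ell}$ rather than $n^{-\ell+1}$, and summing $\binom{m}{\ell-1}\cdot(\text{poly in }\ell)\cdot n^{-\ell}$ over $\ell \ge 3$ gives $O(1/n)$ up to the $\mathrm{poly}(\ell)$ factor for the number of index-and-partition choices, which is absorbed because for $\ell \le \log n$ (item \eqref{it:anomaly-size}) it is $n^{o(1)}$ and the sum over $\ell$ is still geometrically dominated. Item \eqref{it:anomaly-intersection} is handled the same way: two intersecting anomalies of sizes $\ell_1,\ell_2$ together with a shared bucket form a configuration on $\ell_1+\ell_2-2$ free keys whose hash values must realise $(\ell_1-1)+(\ell_2-1)$ ``internal'' coincidences plus at least one ``cross'' coincidence (the shared bucket), i.e.\ probability at most $n^{-(\ell_1-1)-(\ell_2-1)-1}$, against roughly $m^{\ell_1+\ell_2-2}$ choices of keys; the net exponent of $n$ is $\le -1$ after accounting, and summing the resulting doubly-geometric series over $\ell_1,\ell_2 \ge 3$ yields $\tilde O(1/n)$.

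The genuinely fiddly part is bookkeeping the number of ``shapes'' (choices of which hash indices coincide with which) so that this multiplicity stays subpolynomial: one bounds it crudely by $k^{2\ell}$ times the number of partitions of a set of size $(k-1)\ell+1$, i.e.\ by something like $\ell^{O(\ell)}$, and then one uses $\ell \le \log n$ from item \eqref{it:anomaly-size} to argue $\ell^{O(\ell)} = n^{o(1)}$, which is swallowed by the extra factor of $n$ we gained from the additional coincidence. Everything else is a routine union bound plus Markov's inequality, taking a union over the four (or $O(\log n)$, after conditioning on item \eqref{it:anomaly-size}) events.
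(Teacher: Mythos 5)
Your overall plan---union bounds over blueprint-like configurations, with each extra forced coincidence of hash values paying an extra factor $1/n$---is the same strategy the paper uses, but two of your quantitative steps fail as written. First, items (i) and (ii) rest on the claim that $ck<1$. This is false: the hypothesis gives $c=m/n<c_k^{\triangle}-\varepsilon<1$, but $k\ge 3$, so e.g.\ for $k=3$ one can have $ck\approx 2.4$. Hence $n(ck)^{t}$ with $t=\Theta(\log n)$ is polynomially \emph{large}, not $n^{-\Omega(\log n)}$, and your ``geometric tail'' $\sum_{\ell>\log n}k(ck)^{\ell-1}$ diverges. The repair is exactly the factorial you discarded: keeping $\binom{m}{t}\le m^{t}/t!$ (or $\le (em/t)^{t}$) gives $n\,(eck/t)^{t}=n^{-\omega(1)}$, which is how the paper's computation \eqref{eq:basic-union-bound} and its proof of (i) are closed; the paper then gets (ii) for free from (i), since an anomaly of size $\ell$ forces $\ell-1$ keys of $S_0$ to hash to its centre.

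Second, your resolution of the ``fiddly'' multiplicity in (iii) does not go through. For $\ell\le\log n$ one has $\ell^{O(\ell)}=(\log n)^{\Theta(\log n)}=n^{\Theta(\log\log n)}$, which is superpolynomial, not $n^{o(1)}$, so the single extra factor $1/n$ gained from one additional coincidence cannot absorb it (and even the $1/(\ell-1)!$ from the key choices cannot, if the constant in the exponent exceeds $1$). Moreover the pairing ``number of partitions $\times$ worst-case probability $n^{-\ell}$'' is the wrong accounting: partitions with many merges must be charged their own, smaller probabilities. The clean fix---and what the paper does---is to note that $|h(A)|\le(k-1)|A|$ requires at least one specific pair among the $k\ell$ hash slots to collide beyond the forced centre identifications, so a union bound over the $\binom{k\ell}{2}$ pairs, each colliding with probability $1/n$, suffices; the multiplicity is then only $O(k^{2}\ell^{2})$ and no appeal to (ii) is needed. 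For (iv) your exponent count is correct in the disjoint-key case, but the cases where $A_1$ and $A_2$ share keys and/or a centre need explicit bookkeeping (free keys and coincidences both drop by the overlap $\bar\ell$), the two-shared-keys case is best reduced to the shared-centre case via (iii), and in the shared-centre case a surviving factor of $\ell_1$ must be absorbed using (ii), which is precisely where the bound degrades to $\tilde{\mathcal{O}}(1/n)$. As written, these parts are a plan rather than a proof, and the specific absorption step you propose is the point at which it breaks.
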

\begin{proof}\begin{enumerate}[(i)]
    • It is well-known that when $n$ balls are randomly thrown into $n$ bins then the expected maximum load of a bin is $𝒪(\frac{\log n}{\log \log n})$ whp \cite{Gonnet:LongestProbeSequence:81,M:The_Power:1991}, which implies that in our setting every bucket stores $𝒪(\frac{\log n}{\log \log n})$ keys whp. We give a short self-contained proof nonetheless.
    Let $p_{ij}$ be the probability that a specific bin $i ∈ [n]$ stores at least $j ∈ [n]$ keys. A union bound and Stirling's formula gives
    \[ p_{ij} ≤ \binom{km}{j}n^{-j} ≤ \frac{(km)^j}{j!}n^{-j} ≤ \frac{k^j}{j!} ≤ \frac{(k\e)^j}{j^j}.\]
    For $j = 6\frac{\log n}{\log \log n}$ we get for large $n$ and using $x^{\frac{1}{\log x}} = 2$ that
    \[ p_{ij} ≤ \frac{(k^6\e^6)^{\frac{\log n}{\log \log n}}}{(6\frac{\log n}{\log \log n})^{6\frac{\log n}{\log \log n}}} ≤ \frac{2^{\log n}}{(\sqrt{\log n})^{6\frac{\log n}{\log \log n}}} ≤ \frac{n}{2^{3 \log n}} = n^{-2}.\]
    Summing over all $i$ implies that no bin stores $ω(\frac{\log n}{\log \log n})$ keys whp.
    • Since a native anomaly of size $ℓ$ with centre $i$ requires $ℓ-1$ keys from $S₀$ to be stored in $i$, the claim follows from \lipicsLabel{(i)}.
    • Let $A ∈ \A$ be an anomaly and $ℓ = |A|$. There are $kℓ$ relevant hash values. The centre of $A$ occurs as a hash value $ℓ$ times, hence there are at most $(k-1)ℓ+1$ distinct hash values. For there to be at most $(k-1)ℓ$ distinct hash values, an additional identity of two hash values is needed. Since there are at most $\binom{kℓ}{2}$ potential identities that are realised with probability $\frac{1}{n}$ each, we get with calculations similar to~\eqref{eq:basic-union-bound}
    \begin{align*}
        \Pr[∃A ∈ \A: h(\A) ≤ (k-1)|A|]
        &≤ \sum_{ℓ ≥ 3} \binom{m}{ℓ-1} k^ℓ n^{-ℓ+1}\binom{kℓ}{2} \frac{1}{n}\\
        &≤ \sum_{ℓ ≥ 3} \frac{n^{ℓ-1}}{(ℓ-1)!} \frac{k^ℓ}{n^ℓ} k²ℓ² ≤ \frac{1}{n}\sum_{ℓ ≥ 3} \frac{k^{ℓ+2}ℓ²}{(ℓ-1)!} = 𝒪(1/n).
    \end{align*}
    • The main complication stems from the possibility that $A₁$ and $A₂$ may share some keys. We distinguish four cases.
    \begin{description}
        •[Case 1: Shared centres.] Consider the event $E₁$ that there exist $A₁, A₂ ∈ \A$ with $A₁ ≠ A₂$ and the same centre. Assume $|A₁| = ℓ₁$, $|A₂| = ℓ₂$, $|A₁ ∩ A₂| = \bar{ℓ}$ and wlog $A₂ \setminus A₁ ≠ ∅$.
        
        The set $A₂$ can be uniquely identified by $\bar{ℓ}$ keys from $A₁$ and $ℓ₂-\bar{ℓ}-1$ keys from $S₀$. We argue similar to \cref{eq:basic-union-bound}.
        \begin{align*}
        \Pr[E₁] &≤
        \sum_{ℓ₁ ≥ 3} \sum_{ℓ₂ ≥ 3} \sum_{0 ≤ \bar{ℓ} ≤ \min(ℓ₁,ℓ₂-1)} \binom{m}{ℓ₁-1} \binom{ℓ₁}{\bar{ℓ}} \binom{m}{ℓ₂-\bar{ℓ}-1} k^{ℓ₁+ℓ₂-\bar{ℓ}} n^{ℓ₁+ℓ₂-\bar{ℓ}-1}\\
        &
        ≤ \sum_{\bar{ℓ}≥0} \sum_{ℓ₁ ≥ \bar{ℓ}} \sum_{ℓ₂ ≥ \bar{ℓ}+1} \frac{n^{ℓ₁-1}}{(ℓ₁-1)!}\frac{ℓ₁!}{\bar{ℓ}!(ℓ₁-\bar{ℓ})!}\frac{n^{ℓ₂-\bar{ℓ}-1}}{(ℓ₂-\bar{ℓ}-1)!} k^{ℓ₁+ℓ₂-\bar{ℓ}} n^{ℓ₁+ℓ₂-\bar{ℓ}-1}\\
        &≤ \frac{1}{n} \sum_{\bar{ℓ}≥0} \sum_{ℓ₁ ≥ \bar{ℓ}} \sum_{ℓ₂ ≥ \bar{ℓ}+1} \frac{ℓ₁k^{ℓ₁+ℓ₂-\bar{ℓ}}}{\bar{ℓ}!\,(ℓ₁-\bar{ℓ})!\,(ℓ₂-\bar{ℓ}-1)!}\\
        &≤ \frac{1}{n} \sum_{\bar{ℓ} ≥ 0} \frac{k^{\bar{ℓ}}}{\bar{ℓ}!} \sum_{ℓ₁ ≥ \bar{ℓ}} \frac{ℓ₁k^{ℓ₁-\bar{ℓ}}}{(ℓ₁-\bar{ℓ})!} \sum_{ℓ₂ ≥ \bar{ℓ}+1} \frac{k^{ℓ₂-\bar{ℓ}}}{(ℓ₂-\bar{ℓ}-1)!}\\
        &≤ \frac{k\log n}{n} \bigg(\sum_{ℓ ≥ 0} \frac{k^ℓ}{ℓ!}\bigg)^3 = 𝒪\Big(\frac{\log n}{n}\Big) = \tilde{𝒪}(1/n)
        \end{align*}
        where we used $ℓ₁ ≤ \log(n)$ towards the end which we may assume by \itemref{it:anomaly-size}.
        •[Case 2: $\bm{|A₁ ∩ A₂| ≥ 2}$.]
        By \itemref{it:anomaly-footprint} the hashes $h(x)$ and $h(y)$ of two distinct keys $x,y$ in any anomaly $A$ intersect exactly in the centre of $A$ whp. If two anomalies $A₁$ and $A₂$ share two keys $x$ and $y$, they must therefore also share their centre whp. Therefore Case 2 implies Case 1 whp.
        •[Case 3: Distinct centres and $\bm{|A₁ ∩ A₂| = 1}$.] Consider the event $E₃$ that there exist anomalies $A₁$ and $A₂$ with distinct centres and one shared key. Let $ℓ₁ = |A₁|$ and $ℓ₂ = |A₂|$. Now $A₂$ is uniquely identified by one of the $ℓ₁$ keys from $A₁$ and $ℓ₂-2$ keys from $S₀$. We get
        \begin{align*}
            \Pr[E₃] &≤ \sum_{ℓ₁ ≥ 3}\sum_{ℓ₂ ≥ 3}\binom{m}{ℓ₁-1} ℓ₁ \binom{m}{ℓ₂-2} k^{ℓ₁+ℓ₂} n^{-ℓ₁-ℓ₂+2}\\
            &≤ \sum_{ℓ₁ ≥ 3}\sum_{ℓ₂ ≥ 3} \frac{n^{ℓ₁-1}}{(ℓ₁-1)!}ℓ₁\frac{n^{ℓ₂-2}}{(ℓ₂-2)!} k^{ℓ₁+ℓ₂} n^{-ℓ₁-ℓ₂+2}\\
            &≤ \frac{1}{n} \sum_{ℓ₁ ≥ 3}\frac{k^{ℓ₁}ℓ₁}{(ℓ₁-1)!} \sum_{ℓ₂ ≥ 3}\frac{k^{ℓ₂}}{(ℓ₂-2)!}
            ≤ 𝒪(1/n).        
        \end{align*}
        
        •[Case 4: Distinct centres and $\bm{A₁ ∩ A₂ = ∅}$.] Consider the event $E₄$ that there exist anomalies $A₁$ and $A₂$ with distinct centres sharing no key but sharing some $i ∈ h(A₁)∩ h(A₂)$. Let $ℓ₁ = |A₁|$ and $ℓ₂ = |A₂|$ and assume wlog that $i$ is not the centre of $A₁$. One of the $(k-1)ℓ₁$ non-centre hashes of keys in $A₁$ must coincide with one of the $kℓ₂$ hashes from keys in $A₂$. We get 
        \begin{align*}
            \Pr[E₄] &≤ \sum_{ℓ₁ ≥ 3}\sum_{ℓ₂ ≥ 3}\binom{m}{ℓ₁-1} \binom{m}{ℓ₂-1} k^{ℓ₁+ℓ₂}n^{-ℓ₁-ℓ₂+2}(k-1)ℓ₁kℓ₂ \tfrac{1}{n}\\
            &≤ \sum_{ℓ₁ ≥ 3}\sum_{ℓ₂ ≥ 3} \frac{n^{ℓ₁-1}}{(ℓ₁-1)!}\frac{n^{ℓ₂-1}}{(ℓ₂-1)!} k^{ℓ₁+ℓ₂}n^{-ℓ₁-ℓ₂+2}(k-1)ℓ₁kℓ₂ \tfrac{1}{n}\\
            &≤ \frac{1}{n} \sum_{ℓ₁ ≥ 3} \frac{k^{ℓ₁+1}ℓ₁}{(ℓ₁-1)!}\sum_{ℓ₂ ≥ 3} \frac{k^{ℓ₂+1}ℓ₂}{(ℓ₂-1)!} = 𝒪(1/n).\qedhere
        \end{align*}
    \end{description}
\end{enumerate}\end{proof}
We can now derive a concentration bound on the number of native anomalies.
\begin{lemma}
    \label{lem:logn-anomalies}
    There are $|\A| = 𝒪(\log n)$ native anomalies whp.
\end{lemma}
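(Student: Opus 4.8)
The plan is to upgrade the expectation bound $\mathbb{E}[|\A|] = \mathcal{O}(1)$ from~\eqref{eq:basic-union-bound} to a high-probability bound $|\A| = \mathcal{O}(\log n)$ via a union bound over the number of \emph{realised blueprints}, since $|\A|$ is dominated by that count. Concretely, I would fix a threshold $t := C\log n$ for a suitable constant $C$ and show $\Pr[\text{at least } t \text{ blueprints are realised}] = \tilde{\mathcal{O}}(1/n)$. The natural approach is a crude moment/counting bound: the probability that some particular collection of $t$ distinct blueprints are \emph{all} realised, summed over all such collections, is small. Here it is convenient to invoke \cref{lem:canonical-anomalies}\itemref{it:anomaly-size}, which lets us assume every realised blueprint has size $\ell \le \log n$, so each blueprint is specified by at most $\log n$ keys from $S_0$ together with $\le k\log n$ choices of hash indices, and is realised with probability $n^{-\ell+1} \le n^{-2}$ (using $\ell \ge 3$).

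In more detail, I would bound the $t$-th factorial moment (or just the probability of a fixed set of $t$ simultaneously-realised blueprints): summing $\prod_{s=1}^t \Pr[E_{P_s}]$ over all ordered $t$-tuples of distinct blueprints factors, by independence-across-disjoint-constraints-type reasoning, into essentially $\big(\sum_{P} \Pr[E_P]\big)^t / t!$ up to lower-order corrections — but here one must be careful because realisation events of overlapping blueprints are positively correlated, so a clean factorisation is not immediate. A safer route: each blueprint $P$ with parameter $\ell$ is realised with probability $n^{-\ell+1}$, and the contribution of parameter-$\ell$ blueprints to the expected count is $\binom{m}{\ell-1}k^\ell n^{-\ell+1} \le k (ck)^{\ell-1}/(\ell-1)!$ as in~\eqref{eq:basic-union-bound}; so if $N$ denotes the number of realised blueprints, then $N$ is stochastically dominated by a sum of indicator variables whose total mean is $\mu := k e^{ck} = \mathcal{O}(1)$. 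Because the $\ell \ge 3$ constraint forces each indicator to have probability $\le n^{-2}$, a variant of the standard "sum of rare, lightly-correlated indicators" concentration argument (e.g. bounding $\Pr[N \ge t]$ by $\binom{?}{t}(\max_P \Pr[E_P])^{\text{something}}$, or a Chernoff-type bound exploiting that a single key lies in $\mathcal{O}(\log n)$ buckets whp by \itemref{it:bucket-degree}) gives $\Pr[N \ge C\log n] = \tilde{\mathcal{O}}(1/n)$ for $C$ large enough in terms of $\mu$ and $k$.

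The main obstacle I anticipate is handling the positive correlations between realisation events of blueprints that share keys or hash-coordinates, which prevents a naive "independent rare events $\Rightarrow$ Poisson tail" argument. I would resolve this the same way \cref{lem:canonical-anomalies} is proved: condition on the high-probability events \itemref{it:bucket-degree} and \itemref{it:anomaly-size}, so that the combinatorial structure is tame (each key touches $\le \log n$ buckets, anomalies have size $\le \log n$), and then carefully set up the union bound over $t$-element families of blueprints, charging each family by the number of \emph{independent} hash-equality constraints it imposes — a family of $t$ pairwise "sufficiently disjoint" blueprints imposes $\ge 2t$ independent constraints, each holding with probability $1/n$, while the combinatorial prefactor counting such families is at most $(m \cdot k \cdot \text{poly}\log n)^{\mathcal{O}(t)}$, which is beaten by the $n^{-2t}$ factor once $t = \Omega(\log n)$. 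This is exactly the pattern of the four-case union-bound arguments already carried out for \cref{lem:canonical-anomalies}, so I expect the details to be routine though somewhat tedious; the proof should close with the observation that $|\A| \le N$ together with the tail bound on $N$.
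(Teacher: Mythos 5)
Your overall plan (a Poisson-type tail for the number $N$ of realised blueprints, starting from $\mathbb{E}[N]=\mathcal{O}(1)$ as in \eqref{eq:basic-union-bound} and $|\A|\le N$) is the same as the paper's, but the one idea that makes such a union bound legitimate is missing. You correctly identify the obstacle — realisation events of overlapping blueprints are correlated — but your proposed fix is to sum only over families of $t$ \emph{pairwise sufficiently disjoint} blueprints, and you never supply the reduction from ``at least $t$ realised blueprints'' to ``many pairwise disjoint realised blueprints''. That reduction is not automatic: consistent with \cref{lem:canonical-anomalies}\,\itemref{it:bucket-degree}, a single centre bucket could a priori host a large number of realised blueprints (any subset of the up to $\log n$ keys stored there whose XOR also hashes there), so overlapping blueprints cannot simply be discarded, and conditioning on \itemref{it:bucket-degree} and \itemref{it:anomaly-size} does not by itself make the relevant probabilities factor (besides, conditioning perturbs the hash distribution; one should intersect with these events instead). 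The paper's device, absent from your proposal, is to index events by the \emph{centre bucket}: for blueprint-realisation events $E_{P_1,i_1},\dots,E_{P_b,i_b}$ at pairwise distinct buckets, either the constrained random variables $h_j(x)$ are pairwise distinct (so the events are independent) or some variable is forced to two different values (so the intersection has probability zero), see \eqref{eq:indep-or-incons}. This yields $\Pr[\text{at least } b \text{ centres}]\le\binom{n}{b}\big(\sum_{P}\Pr[E_{P,1}]\big)^b\le (k\e^{k+1}/b)^b$, and the step back from ``$\mathcal{O}(\log n)$ centres'' to ``$\mathcal{O}(\log n)$ anomalies'' uses that no two native anomalies share a centre whp (\cref{lem:canonical-anomalies}).

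Separately, your quantitative accounting does not close even for disjoint families: a parameter-$\ell$ blueprint has combinatorial weight $\approx m^{\ell-1}k^{\ell}$ against realisation probability $n^{-\ell+1}$, and with $m=cn$ these exactly balance, so the claim that a prefactor $(m\cdot k\cdot\mathrm{poly}\log n)^{\mathcal{O}(t)}$ is ``beaten by the $n^{-2t}$ factor once $t=\Omega(\log n)$'' is false — the product is $\Theta(1)^{t}$ however large $t$ is. The decay must come from the $1/t!$ of choosing an unordered family of distinct blueprints (equivalently the $b^b$ denominator in the paper's computation), i.e.\ a Poisson tail $(\mathcal{O}(1)/t)^{t}$, which is $\mathcal{O}(1/n)$ for $t=\Omega(\log n)$. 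You mention this $1/t!$ in your first, abandoned factorisation attempt but drop it in the ``safer route'', so as written the proposal does not deliver the claimed $\tilde{\mathcal{O}}(1/n)$ failure probability; with the centre-indexing trick and the $1/t!$ restored, it becomes the paper's proof.
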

\begin{proof}
    The challenge here is to navigate the fact that anomalies do not occur independently.

    Recall the definition of anomaly blueprints. Let $E_{P,i}$ be the event that a blueprint $P ∈ \P$ is realised at a bucket $i ∈ [n]$. Importantly, $E_{P,i}$ is simply the event that certain random variables in the family $(h_j(x))_{x ∈ [u], j ∈ [k]}$ turn out to be $i$. If we have a sequence $E_{P₁,i₁},…,E_{P_b,i_b}$ of such events pertaining to pairwise distinct buckets $i₁,…,i_b ∈ [n]$ then these events either refer to pairwise distinct random variables and are hence independent, or two events refer to the same random variable and are hence disjoint (i.e.\ inconsistent). Therefore
    \begin{equation}
        \Pr\Big[\bigcap_{j ∈ [b]} E_{P_j,i_j}\Big] ∈ \bigg\{0, \ \ \prod_{j ∈ [b]}{\Pr[E_{P_j,i_j}]} \bigg\} \label{eq:indep-or-incons}
    \end{equation}
    Now define $E_i := \bigcup_{P ∈ \P} E_{P,i}$ to be the event that at least one native anomaly has centre $i$. We can now bound the probability that at least $b$ of these events occur.
    \begin{align*}
        \Pr\big[\sum_{i ∈ [n]}&𝟙_{E_i} ≥ b\big]
        ≤ \sum_{I ⊆ [n], |I| = b} \Pr\Big[\bigcap_{i ∈ I} E_i\Big]
        \reasonrel{sym}{=} \binom{n}{b} \Pr\Big[\bigcap_{i = 1}^b E_i\Big]\\
        &= \binom{n}{b} \Pr\Big[\bigcap_{i = 1}^b \bigcup_{P ∈ \P} E_{P,i}\Big]
         = \binom{n}{b} \Pr\Big[\bigcup_{P₁,…,P_b ∈ \P}\ \bigcap_{i = 1}^b E_{P_i,i}\Big]\\
         &≤ \binom{n}{b} \sum_{P₁,…,P_b ∈ \P} \Pr\Big[ \bigcap_{i = 1}^b E_{P_i,i}\Big]
         \eqrefrel{eq:indep-or-incons}{}{≤} \binom{n}{b} \sum_{P₁,…,P_b ∈ \P} \prod_{i = 1}^b \Pr[E_{P_i,i}]\\
         &= \binom{n}{b} \prod_{i = 1}^b \sum_{P ∈ \P} \Pr[E_{P,i}]
         = \binom{n}{b} \bigg(\sum_{P ∈ \P} \Pr[E_{P,1}]\bigg)^b
         = \binom{n}{b} \bigg(\sum_{ℓ ≥ 3} \binom{m}{ℓ-1} \frac{k^ℓ}{n^{ℓ}} \bigg)^b\\
         &≤ \frac{n^b}{b!} \bigg(\sum_{ℓ ≥ 3} \frac{n^{ℓ-1}}{(ℓ-1)!} \frac{k^ℓ}{n^{ℓ}} \bigg)^b
         ≤ \frac{1}{b!} \bigg(\sum_{ℓ ≥ 3} \frac{k^ℓ}{(ℓ-1)!}\bigg)^b
         = \frac{(k\e^k)^b}{b!} ≤ \frac{(k\e^{k+1})^b}{b^b}.
    \end{align*}
    For $b = Ω(\log n)$ the last term is $𝒪(1/n)$, meaning that only $𝒪(\log n)$ buckets are the centre of native anomalies whp. Since no two native anomalies share a centre whp by \cref{lem:canonical-anomalies} \itemref{it:anomaly-size} this implies that there are $𝒪(\log n)$ native anomalies whp as desired.
\end{proof}
\begin{lemma}
    \label{lem:native-anomalies-only}
    During decoding, only native anomalies are triggered whp.
\end{lemma}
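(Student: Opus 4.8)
The plan is to track the set of foreign keys currently held by the sketch and show that the first moment a foreign anomaly could be triggered already forces a configuration excluded by \cref{lem:canonical-anomalies}. The starting observations are that foreign keys enter $S$ only through anomalous steps, and that every anomalous step has an underlying triggered anomaly $A^*$, namely the detected key together with the keys of $S$ that hash to the centre bucket. If $A^*$ is \emph{native}, it has at most one foreign key $z_{A^*}$, which is then the only foreign key this step can add. Writing $Z := \{\, z_A : A ∈ \A \text{ has a foreign key}\,\}$, a short induction over the steps of \op{decode} — using only that regular steps remove a key, that anomalous steps add a key, and the previous sentence — yields the invariant that, as long as no foreign anomaly has been triggered, the set $F := S \setminus S₀$ of foreign keys in the sketch satisfies $F ⊆ Z$ (each added foreign key being the foreign key of a native underlying anomaly). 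If no foreign anomaly is ever triggered we are done; otherwise let $B$ be the first foreign anomaly that is triggered, with centre $i_B$ and, at that moment, exactly one missing key $y$, so $B \setminus \{y\} ⊆ S$ and $B$ has at least two foreign keys; note $F ⊆ Z$ still holds at this moment.

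Since only $y$ among the keys of $B$ is missing from $S$, at least one foreign key of $B$ lies in $F ⊆ Z$. If \emph{two} foreign keys $z₁ ≠ z₂$ of $B$ lie in $S$, then $z_j$ is the unique foreign key of some native anomaly $A_j$ ($j = 1, 2$), and $A₁ ≠ A₂$ (otherwise that anomaly would have two foreign keys); since all keys of $B$ hash to $i_B$ this gives $i_B ∈ h(z₁) ∩ h(z₂) ⊆ h(A₁) ∩ h(A₂)$, contradicting \cref{lem:canonical-anomalies} \itemref{it:anomaly-intersection}. So exactly one foreign key $z$ of $B$ lies in $S$, whence $B$ has exactly the two foreign keys $z$ and $y$, so $B = \{z, y\} ∪ N$ with $N ⊆ S₀$ and $|N| = |B| - 2 ≥ 1$, while the anomaly relation $\bigoplus_{x ∈ B} x = 0$ forces $y = z ⊕ \bigoplus_{x ∈ N} x$; here $z ∈ Z$ is the unique foreign key of some native anomaly $A$.

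What remains is a union bound, in the style of \eqref{eq:basic-union-bound} and \cref{lem:canonical-anomalies}, ruling out such a pair $(A, B)$ whp. I would parametrise $A$ by an anomaly blueprint (its $|A| - 1$ keys from $S₀$, which fix $z$ as their XOR, plus $|A|$ hash indices), realised with probability $n^{-|A|+1}$, and then bound the conditional probability that there are a bucket $i_B ∈ h(z)$, a set $N ⊆ S₀$ of size $|B| - 2$, and hash indices making $z$, every key of $N$, and the determined key $y$ all hash to $i_B$. It is convenient to split on whether $i_B$ is the centre of $A$ — in which case \cref{lem:canonical-anomalies} \itemref{it:bucket-degree} limits how many keys of $S₀$ can hash there and no factor is paid for the choice of $i_B$ — or is one of the $\le k - 1$ other hashes of $z$, in which case choosing $i_B$ costs a factor $k$ and each of the $|B| - 1$ keys in $N ∪ \{y\}$ hashing to $i_B$ costs an independent factor $\le k/n$. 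In both cases the resulting double sum over $|A|, |B| ≥ 3$ has the same shape as in \eqref{eq:basic-union-bound} and is $𝒪(1/n)$; together with the $\tilde{𝒪}(1/n)$ failure probability of \cref{lem:canonical-anomalies} this proves the lemma.

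The step I expect to be the main obstacle is this last union bound. The difficulty is that in the surviving configuration the missing key $y$ is a priori an arbitrary element of $[u] \setminus S₀$, so summing over foreign anomalies $B$ naively diverges. Two points make it go through: (i) the relation $\bigoplus_{x ∈ B} x = 0$, which fixes $y$ once the rest of $B$ is chosen; and (ii) \cref{lem:canonical-anomalies} \itemref{it:anomaly-intersection}, which confines us to the case that $B$ meets $Z$ in a single foreign key — the two-foreign-key case being impossible because then the common bucket $i_B$ would lie in the disjoint footprints of two distinct native anomalies. A secondary nuisance, handled by the same kind of case split as in the proof of \cref{lem:canonical-anomalies} \itemref{it:anomaly-intersection}, is the bookkeeping when $B$ and $A$ happen to share their centre or share keys other than $z$.
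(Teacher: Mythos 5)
Your reduction to a static configuration is essentially the paper's own argument: take the first foreign anomaly $B$ that is triggered, use \cref{lem:canonical-anomalies} \itemref{it:anomaly-intersection} to rule out two present foreign keys, conclude that $B$ has exactly two foreign keys---one present and supplied by a previously triggered \emph{native} anomaly $A$, one absent and determined by the XOR relation---and then union-bound over the pairs $(A,B)$, split according to whether $B$'s centre coincides with the centre of $A$. Your distinct-centre computation (a factor $k$ for which hash of $z$ equals $i_B$, a factor $\binom{m}{|B|-2}(k/n)^{|B|-2}$ for the new native keys of $B$, and a final $k/n$ for the determined key $y$) is exactly the paper's Case~1 and is sound; moreover, by \itemref{it:anomaly-footprint} the set $N$ is automatically disjoint from $A$ in that case, so the ``shared keys other than $z$'' bookkeeping only arises when the centres coincide.

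The gap is in your explicit proposal for the shared-centre case. Bounding the choices of $N$ by ``keys of $S_0$ hashing to $i_B$'' via \itemref{it:bucket-degree} and paying only the single factor $k/n$ for $y$ does not give $\mathcal{O}(1/n)$: the union bound ranges over all subsets of the keys stored at $i_B$, i.e.\ over up to $2^{\deg(i_B)}$ candidate sets $N$ (each determining a different $y$), and \itemref{it:bucket-degree} only caps $\deg(i_B)$ by $\log n$, so the bound you get is of order $2^{\log n}\cdot k/n$, which is $\Omega(1)$ rather than $\mathcal{O}(1/n)$. The worst-case degree bound is simply too lossy here; what actually makes the event rare is that high-degree centres are exponentially unlikely, and this must enter the computation. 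The correct accounting---which the paper carries out, and which your closing reference to the case split of \cref{lem:canonical-anomalies} \itemref{it:anomaly-intersection} points towards without performing---is to parametrise by the number $\bar\ell$ of native keys shared between $A$ and $B$, pay a factor $\binom{|A|-1}{\bar\ell}$ for the shared keys (which hash to the centre for free), a factor $\binom{m}{|B|-\bar\ell-2}(k/n)^{|B|-\bar\ell-2}$ for the new native keys of $B$ hashing to the already-fixed centre, and the final $k/n$ for $y$; these per-key probability factors replace the degree bound and the resulting double sum is $\mathcal{O}(1/n)$. With that substitution your argument coincides with the paper's proof.
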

\begin{proof}
    Assume there is a first time $t$ when a foreign anomaly $A₂$ is triggered. Let $i₂$ be its centre and $S$ the set of keys stored in the sketch at time $t$.

    The previously triggered native anomalies may already have introduced some foreign keys to $S$, but by \cref{lem:canonical-anomalies} \itemref{it:anomaly-intersection} these anomalies $A ∈ \A$ have pairwise disjoint domains $h(A)$, so each bucket stores at most one foreign key. The facts that $A₂$ contains at least two foreign keys and that all but one of the keys from~$A₂$ must be present in $S$ in order for $A₂$ to be triggered imply that $A₂$ contains exactly two foreign keys, one of which is present in $S$, call it $y₁ ∈ A₂ ∩ S \setminus S₀$, and one of which is absent, call it $y₂ ∈ A₂ \setminus (S ∪ S₀)$. The presence of $y₁$ in $S$ is due to an anomaly $A₁$ with some centre~$i₁$ that was triggered previously and must be native by choice of $t$. We bound the probability for such a situation to exist, distinguishing two cases. For both we define $ℓ₁ := |A₁|$ and $ℓ₂ := |A₂|$.
    \begin{description}
        •[Case 1: $\bm{i₁ ≠ i₂}$.] We have $A₁ ∩ A₂ = \{y₁\}$ because by \cref{lem:canonical-anomalies} \itemref{it:anomaly-footprint} no two keys from $A₁$ can share $i₂$ as a hash value.
        The pair $(A₁,A₂)$ is uniquely determined by the $ℓ₁-1$ native keys from $A₁$ and the $ℓ₂-2$ native keys from $A₂$. The probability for such a pair to exist is
        \begin{align*}
            &\sum_{ℓ₁ ≥ 3}\sum_{ℓ₂ ≥ 3} \binom{m}{ℓ₁-1}\binom{m}{ℓ₂-2} k^{ℓ₁+ℓ₂} n^{-ℓ₁-ℓ₂+2}\\
            ≤ &\sum_{ℓ₁ ≥ 3}\sum_{ℓ₂ ≥ 3} \frac{n^{ℓ₁-1}}{(ℓ₁-1)!}\frac{n^{ℓ₂-2}}{(ℓ₂-2)!} k^{ℓ₁+ℓ₂} n^{-ℓ₁-ℓ₂+2}
            ≤ \frac{1}{n} \sum_{ℓ₁ ≥ 3} \frac{k^{ℓ₁}}{(ℓ₁-1)!} \sum_{ℓ₂ ≥ 3} \frac{k^{ℓ₂}}{(ℓ₂-2)!} = 𝒪(1/n).
        \end{align*}
        •[Case 2: $\bm{i₁ = i₂}$.] Similar to the proof of \cref{lem:canonical-anomalies} \itemref{it:anomaly-intersection} Case 1, there may now be some number $\bar{ℓ} := |A₁ ∩ A₂ ∩ S₀|$ of shared native keys. Otherwise the computation is similar to Case 1.
        \begin{align*}
            &\sum_{ℓ₁ ≥ 3}\sum_{ℓ₂ ≥ 3} \sum_{\bar{ℓ} ≤ \min\{ℓ₁-1,ℓ₂-2\}} \binom{m}{ℓ₁-1}\binom{ℓ₁-1}{\bar{ℓ}}\binom{m}{ℓ₂-\bar{ℓ}-2} k^{ℓ₁+ℓ₂-\bar{ℓ}-1}n^{-ℓ₁-ℓ₂+\bar{ℓ}+2}\\
            ≤ & \sum_{\bar{ℓ}≥0} \sum_{ℓ₁ ≥ \bar{ℓ}+1} \sum_{ℓ₂ ≥ \bar{ℓ}+2} \frac{n^{ℓ₁-1}}{(ℓ₁-1)!}\frac{(ℓ₁-1)!}{\bar{ℓ}!\,(ℓ₁-\bar{ℓ}-1)!}\frac{n^{ℓ₂-\bar{ℓ}-2}}{(ℓ₂-\bar{ℓ}-2)!} k^{ℓ₁+ℓ₂-\bar{ℓ}-1}n^{-ℓ₁-ℓ₂+\bar{ℓ}+2}\\
            ≤ & \frac{1}{n}\sum_{\bar{ℓ} ≥ 0} \frac{k^{\bar{ℓ}}}{\bar{ℓ}!} \sum_{ℓ₁ ≥ \bar{ℓ}-1} \frac{k^{ℓ₁-\bar{ℓ}-1}}{(ℓ₁-\bar{ℓ}-1)!} \sum_{ℓ₂ ≥ \bar{ℓ}-2} \frac{k^{ℓ₂-\bar{ℓ}}}{(ℓ₂-\bar{ℓ}-2)!}
            ≤ 𝒪(1/n).
        \end{align*}
        Taken together, no such situation arises whp.\qedhere
    \end{description}
\end{proof}
\begin{lemma}
    \label{lem:no-surprise-keys}
    Let $\SA := \bigcup_{A ∈ \A} A$ be the set of \emph{anomalous keys}. During decoding, we have $S ⊆ S₀ ∪ \SA$ at all times whp.
\end{lemma}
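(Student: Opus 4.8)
The plan is to show that the containment $S ⊆ S₀ ∪ \SA$ is an \emph{invariant} of \op{decode}, maintained by every step, and to carry out the whole argument conditionally on the high-probability event $\mathcal{E}$ of \cref{lem:native-anomalies-only} that only native anomalies are triggered during the run. Conditioned on $\mathcal{E}$ the reasoning is deterministic, so it suffices to prove by induction on the number of steps already executed that $S ⊆ S₀ ∪ \SA$ holds at every state visited by the algorithm; since $\Pr[\mathcal{E}] = 1-\tilde{𝒪}(1/n)$, the lemma then follows. The base case is immediate: before the first step $S = S₀ ⊆ S₀ ∪ \SA$.

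For the inductive step I would suppose $S ⊆ S₀ ∪ \SA$ and consider the next step, which takes place at some bucket $i$ with \op{looksPure}$(i)$ and detects and toggles the key $x := A[i] ≠ 0$. If the step is regular, i.e.\ $x ∈ S$, then $S$ is replaced by $S \setminus \{x\} ⊆ S ⊆ S₀ ∪ \SA$ and there is nothing more to do. If the step is anomalous, i.e.\ $x ∉ S$, I would invoke the observation from \cref{sec:def-anomaly} that every anomalous step is witnessed by a triggered anomaly containing the detected key: letting $x₁,…,x_{ℓ-1}$ be the distinct keys of $S$ whose hash multiset contains $i$ an odd number of times, linearity of \op{toggle} gives $x₁ ⊕ \cdots ⊕ x_{ℓ-1} = A[i] = x$, while \op{looksPure}$(i)$ gives $i ∈ h(x)$; hence $A := \{x₁,…,x_{ℓ-1},x\}$ is an anomaly of size $ℓ$ with centre $i$, and it is triggered at this moment because exactly its keys $x₁,…,x_{ℓ-1}$ lie in $S$, $x ∉ S$, and no other key hashes an odd number of times to $i$. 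By the conditioning on $\mathcal{E}$, $A$ is a native anomaly, so $A ∈ \A$ and therefore $x ∈ A ⊆ \bigcup_{A' ∈ \A} A' = \SA$. Consequently $S$ is replaced by $S ∪ \{x\} ⊆ S₀ ∪ \SA$, the invariant is preserved, and the induction closes.

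I do not expect a real obstacle here: the substantive content — that the witnessing anomaly is native rather than foreign — was already established in \cref{lem:native-anomalies-only}, and everything else is bookkeeping. The two points that need a little care are the multiset convention (``stored in bucket $i$'' should be read as ``hashing to $i$ an odd number of times'', so that keys whose hashes hit $i$ an even number of times are irrelevant both to the value $A[i]$ and to whether the anomaly is triggered) and the fact that at every intermediate state $A[i]$ equals the exclusive-or of the odd-multiplicity keys currently in $S$, which is immediate from the linearity of \op{toggle} noted right after the description of the algorithms and from the invariant $S₀ = \Sdec \mathbin{△} S$.
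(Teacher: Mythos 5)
Your proposal is correct and follows essentially the same route as the paper: induction over decoding steps, with regular steps only shrinking $S$ and anomalous steps adding a key of a triggered anomaly, which is native whp by \cref{lem:native-anomalies-only} and hence lies in $\SA$. The extra bookkeeping you include (the odd-multiplicity/multiset convention and that the detected key together with the bucket's contents forms a triggered anomaly) is exactly the observation the paper already makes in \cref{sec:def-anomaly}, so there is no substantive difference.
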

\begin{proof}
    This follows from induction. Initially we have $S = S₀$. Any regular decoding step removes an element from $S$. Any anomalous decoding step adds a key $y ∈ A$ for some anomaly $A$ that has been triggered. By \cref{lem:native-anomalies-only} we have $A ∈ \A$ and hence $y ∈ \SA$, maintaining the invariant.
\end{proof}

\subsection{Working around anomalies}
\label{sec:working-around-anomalies}

Let $\SA := \bigcup_{A ∈ \A} A$ be the set of \emph{anomalous keys} and $\BA = h(\SA)$ the set of \emph{anomalous buckets}.
Consider a variant \op{decode'} of \op{decode} (see \cref{fig:algorithms}) that receives the set $\BA$ of anomalous buckets as a parameter and ignores these buckets, say by pretending that no $i ∈ \BA$ ever satisfies $\op{looksPure}(i)$. Similar to $S$, we use $S'$ to track the set of keys stored in the sketch over time when \op{decode'} is used. No anomalous decoding steps can occur in \op{decode'}, because the first anomalous decoding step would have to be at the centre of a native anomaly, but these centres are contained in $\BA$ and banned from consideration. In particular, elements are only ever removed from $S'$, never added.

Recall that by a \emph{round} of \op{decode} we mean one iteration of the while-loop. Rounds typically comprise many decoding steps. 
To ensure the $r$th round is well-defined for each $r ∈ ℕ$ we imagine that if and when the algorithm terminates (because both $Q$ and $Q_{\textrm{next}}$ are empty) an infinite number of further rounds take place that contain no steps.

We now show that \op{decode} correctly identifies at least as many keys from $S₀$ as \op{decode'}. For this we define~$S_r$ for $r ∈ ℕ₀$ to be the set of keys stored in the sketch at the start of round $r+1$ when \op{decode} is used and $S_r'$ to be the corresponding set when \op{decode'} is used.
\begin{lemma}
    \label{lem:decode-vs-decode-prime}
     We have $S_r ∩ S₀ ⊆ S_r'$ for all $r ∈ ℕ₀$ whp.
\end{lemma}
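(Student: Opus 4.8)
The plan is to prove the statement by induction on $r$, comparing the two executions of \op{decode} and \op{decode'} round by round. The base case $r = 0$ is immediate since $S_0 = S_0' = S₀$. For the inductive step, assume $S_r ∩ S₀ ⊆ S_r'$ and consider what happens in round $r+1$ of each algorithm. The key observation is that \op{decode'} only ever removes keys from $S'$ (as noted in the text, since it never performs anomalous steps), so a key from $S₀$ disappears from $S_r'$ precisely when some bucket $i \notin \BA$ looks pure with $A[i]$ equal to that key. I want to show that every key from $S₀$ that \op{decode'} removes in round $r+1$ is also removed by \op{decode} in round $r+1$ (and not re-added), so that $S_{r+1} ∩ S₀ ⊆ S_{r+1}'$ continues to hold.

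The heart of the argument is a bucket-by-bucket comparison of the sketch states. Let me track, alongside $S_r$ and $S_r'$, the actual array contents $A_r$ and $A_r'$ at the start of round $r+1$. I would establish the stronger invariant that for every bucket $i \notin \BA$, the contribution to $A_r[i]$ from native keys equals $A_r'[i]$ — more precisely, that $A_r'[i] = \bigoplus_{x ∈ S_r ∩ S₀,\, i ∈ h(x)} x$, while $A_r[i]$ additionally may contain a contribution from foreign keys. By \cref{lem:no-surprise-keys} and \cref{lem:canonical-anomalies}\itemref{it:anomaly-intersection}, any foreign key in $S_r$ lies in $\SA$ and its hash set is contained in $\BA$; hence foreign keys contribute nothing to buckets outside $\BA$. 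Therefore $A_r[i] = A_r'[i]$ for all $i \notin \BA$. Now if \op{decode'} sees $i \notin \BA$ looking pure in round $r+1$ and detects key $x = A_r'[i] = A_r[i] ∈ S_r ∩ S₀$, then \op{decode} — which also processes round $r+1$ and is not blocked at $i$ — will likewise find $i$ looking pure (the \op{looksPure} predicate depends only on $A[i]$) and detect the same $x$.

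The delicate point — and the main obstacle — is synchronising the two breadth-first processes so that a key removed by \op{decode'} in round $r+1$ is genuinely removed by \op{decode} in round $r+1$, accounting for the order in which buckets are processed within a round and the possibility that \op{decode} processes extra (anomalous) buckets that \op{decode'} ignores. The resolution is that within a round, the order of steps does not matter for which keys end up toggled, because \op{looksPure}$(i)$ is re-checked when bucket $i$ is actually processed, and a toggle at one bucket can only destroy the pure-ness of another bucket in $\BA$-land or create new pure buckets for the next round; on buckets outside $\BA$, the equality $A[i] = A'[i]$ is preserved throughout the round since every toggle \op{decode} performs on a key $x$ that \op{decode'} also toggles changes both arrays identically on $i \notin \BA$, and every toggle \op{decode} performs on a key $x \notin S₀$ changes only buckets in $\BA$. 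One must also check that \op{decode} does not "miss" $x$ by having already toggled it away earlier in the round via a different bucket — but that is fine, since being toggled away is exactly being removed, which is what we want. I would carefully write the within-round invariant "$A[i] = A'[i]$ for all $i \notin \BA$, and $\{x ∈ S₀ : x \text{ removed so far this round by \op{decode'}}\} \subseteq \{x : x \text{ removed so far this round by \op{decode}}\}$" and verify it is maintained across each individual step, which then yields the lemma at the end of the round.
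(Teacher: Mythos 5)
Your overall strategy (round-by-round induction, confining foreign keys to $\BA$ via \cref{lem:no-surprise-keys} and \cref{lem:canonical-anomalies}, and transferring the \op{looksPure} verdict from \op{decode'} to \op{decode}) is the same as the paper's, but the central invariant you build it on is false. You claim $A_r'[i] = \bigoplus_{x ∈ S_r ∩ S₀,\, i ∈ h(x)} x$ and hence $A_r[i] = A_r'[i]$ for all $i ∉ \BA$. This amounts to asserting that $S_r ∩ S₀$ and $S_r'$ contain exactly the same keys hashing outside $\BA$, i.e.\ that neither execution runs ahead of the other on such keys. Only one direction holds: \op{decode} can remove a native, non-anomalous key strictly earlier than \op{decode'}. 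For example, let $A ∈ \A$ have an anomalous key $x$ and let $y ∉ \SA$ be a native key with one hash equal to a non-centre bucket $b ∈ h(x) ⊆ \BA$. After \op{decode} removes $x$, bucket $b$ may store only $y$ and \op{decode} detects $y$ there; \op{decode'} is banned from $b$ and may still hold $y$ (its other buckets need not be pure yet). Then $y ∈ S_r' \setminus S_r$, and since $y$ typically has hashes outside $\BA$, the arrays differ on a non-anomalous bucket. The same problem undermines your within-round bookkeeping: there is no toggle-by-toggle matching between the two executions ("every toggle \op{decode} performs on a key that \op{decode'} also toggles"), because \op{decode} performs toggles at buckets in $\BA$ and at buckets that are pure only in its own, more advanced state.

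The fix is to weaken the invariant to exactly the statement being proved (the inclusion $S_r ∩ S₀ ⊆ S_r'$) and argue one key at a time, as the paper does: for $x ∈ S_{r+1} ∩ S₀$, if $x ∈ \SA$ then $h(x) ⊆ \BA$ and \op{decode'} never touches it; otherwise, if \op{decode'} removes $x$ at some bucket $b ∉ \BA$ in round $r+1$, then by the induction hypothesis and \cref{lem:no-surprise-keys} the keys of $S_r$ hashing to $b$ form a \emph{subset} of $\{x\}$, so either \op{decode} has already removed $x$, or $b$ looks pure for \op{decode} at the start of the round and $x$ is removed in round $r+1$; since $x ∉ \SA$ it is never re-added (\cref{lem:native-anomalies-only}), contradicting $x ∈ S_{r+1}$. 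Your remark that "being toggled away earlier is fine" is the right instinct, but it must be built into the case analysis in place of the array-equality claim, not alongside it.
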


\begin{proof}
    We proceed by induction. At the start of round $1$ we have $S₀ = S₀'$ so there is nothing to show. Now assume that at the start of some round $r$ we have $S_r ∩ S₀ ⊆ S_r'$. To show $S_{r+1} ∩ S₀ ⊆ S_{r+1}'$ we consider different cases for $x ∈ S_{r+1}∩S₀$. We may assume that the high-probability guarantees from \cref{lem:native-anomalies-only,lem:no-surprise-keys} hold.
    \begin{description}
        •[Case 1: $\bm{x} ∈ A$ for some $A ∈ \A$.] We have $h(x) ⊆ \BA$, i.e.\ the buckets of $x$ are banned from consideration in \op{decode'}. Since $x ∈ S₀ = S₀'$ and $x$ can never be toggled in \op{decode'} we have $x ∈ S_{r+1}'$ as well.
        •[Case 2: $\bm{x}$ is not part of a native anomaly.] Combined with \cref{lem:native-anomalies-only}, $x$ is not contained in any anomaly that is triggered and could not have been readded to $S$. It was therefore already in $S$ when round $r$ started, meaning $x ∈ S_{r} ∩ S₀$. The induction hypothesis gives $x ∈ S_{r}'$. We have to show $x ∈ S_{r+1}'$. Assume for contradiction that $x ∉ S_{r+1}'$. Then $x$ was removed in round $r$ of \op{decode'}. Thus one of its buckets $b ∈ h(x)$ was in $Q$ at the start of round $r$ of \op{decode'}. Hence $b ∉\BA$ and we had \op{looksPure}($b$) at some prior time. No anomaly can be triggered at $b$ by \cref{lem:native-anomalies-only} and \op{looksPure($b$)} really means that only one key is stored in $b$. Hence $x$ is the only key in $S_r'$ with $b$ as a hash. By induction hypothesis, $x$ is the only key in $S_r ∩ S₀$ with $b$ as a hash. Moreover, since $S_r \setminus S₀ ⊆ \SA$ by \cref{lem:no-surprise-keys} we have $h(S_r \setminus S₀) ⊆ \BA \not\owns b$ so $x$ is the only key in $S_r$ with $b$ as a hash and \op{looksPure($b$)} holds at the start of round $r+1$ of \op{decode}. This implies that $b$ is in $Q$ at the start of round $r$ of \op{decode}. Again using that no triggered anomaly can add keys to bucket $b$, we conclude that $x$ is detected and removed during round $r+1$ of \op{decode}. Moreoever, $x$ cannot be readded afterwards since $x ∉ \SA$. This implies $x ∉ S_{r+1}$, contradicting the choice of $x ∈ S_{r+1} ∩ S₀$. Since the assumption $x ∉ S_{r+1}'$ led to this contradiction we have $x ∈ S_{r+1}'$ as desired. \qedhere
    \end{description}
\end{proof}

\noindent
On the other hand we will show in \cref{sec:main-lemma} that \op{decode'} succeeds in decoding everything except keys in $\SA$ and in fact does so in a polylogarithmic number of rounds:
\begin{lemma}
    \label{lem:peel-all-but-anomalies}
    With high probability, \op{decode'} achieves $S_R' ⊆ \SA$ for some $R = \tilde{𝒪}(1)$.
\end{lemma}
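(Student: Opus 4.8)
The plan is to view \op{decode'} as a synchronous (``breadth--first'') peeling process on the random $k$-uniform hypergraph $H$ with vertex set $[n]$ and one edge $h(x)$ for each key $x\in S_0$, subject to the rule that a bucket in $\BA$ may never witness purity. Throughout I condition on the high-probability events of \cref{lem:canonical-anomalies,lem:logn-anomalies}; in particular $|\A|=𝒪(\log n)$ and every anomaly has size $𝒪(\log n)$, so $|\SA|=𝒪(\log^2 n)$, $|\BA|=k|\SA|=\tilde{𝒪}(1)$, and by \cref{lem:canonical-anomalies} \itemref{it:bucket-degree} every bucket is a hash of at most $\log n$ keys, so only $\tilde{𝒪}(1)$ keys are incident to $\BA$. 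Since \op{decode'} performs no anomalous step, $S'$ only shrinks, and the set $C^{*}$ of keys that \op{decode'} never removes is exactly the ``blocked $2$-core'' of $H$: the keys surviving peeling when $\BA$ is forbidden as a set of witnesses. A short inspection of the queue discipline shows that whenever a bucket $b\notin\BA$ looks pure it is processed in the current round (if $b$ looked pure one round earlier it had already been scheduled and handled; otherwise $b$ became pure in the previous round, which happens only when a key was toggled out of $b$, placing $b$ into $Q$), so \op{decode'} really does peel down to $C^{*}$; it remains to bound the number of rounds and to identify $C^{*}$.

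\emph{Step 1: a constant number of rounds reaches $\delta n$ keys.} I would adapt Molloy's analysis of $2$-cores~\cite[Lemma~3]{Molloy05:Cores-in-random-hypergraphs}, which for a fully random $k$-uniform hypergraph of edge density below $\ckp$ tracks the vertex-degree statistics round by round and shows that the surviving edge fraction after any prescribed constant number of peeling rounds is an arbitrarily small constant whp. The only new feature is that \op{decode'} may not peel through the $\tilde{𝒪}(1)$ anomalous buckets; as these are a $o(1)$-fraction of the buckets and touch only $\tilde{𝒪}(1)$ edges, they perturb every quantity entering Molloy's recursion and its concentration by $o(1)$, leaving the conclusion intact: there is a constant $r_0=r_0(ε,k,\delta)$ with $|S_{r_0}'|\le\delta n$ whp, where $\delta>0$ is the small constant chosen in Step 2.

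\emph{Step 2: from $\delta n$ down to $\tilde{𝒪}(1)$ keys.} Here I use the standard first-moment ``no small dense sub-hypergraph'' bound: for a sufficiently small constant $\delta>0$ there is a constant $c_0>0$ such that whp \emph{every} set $T$ of at most $\delta n$ keys contains at least $c_0|T|$ keys owning a private bucket in $H[T]$ (a violation would be a set of $Ω(|T|)$ keys spending all of their $\Theta(|T|)$ incidences on shared buckets and hence spanning too few buckets, which a first-moment bound over the number $t\le\delta n$ of keys rules out). A private bucket $b\notin\BA$ of a key $x\in S_0$ makes $b$ look pure, and by the queue remark above $x$ is removed in that round. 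Distinct keys have disjoint sets of private buckets, so at most $|\BA|$ of the $\ge c_0|T|$ isolated keys can fail to have a non-blocked private bucket; applying this with $T=S_r'$ (valid as long as $|S_r'|\le\delta n$) gives $|S_{r+1}'|\le(1-c_0)|S_r'|+|\BA|$. Iterating from $|S_{r_0}'|\le\delta n$ drives the count to $\tilde{𝒪}(1)$ after $𝒪(\log n)$ further rounds, and once $\tilde{𝒪}(1)$ keys remain the process reaches $C^{*}$ after a further $\tilde{𝒪}(1)$ rounds (each non-final round removes a key, and at most $\tilde{𝒪}(1)$ remain to be removed). So $|C^{*}|=\tilde{𝒪}(1)$ and \op{decode'} reaches $C^{*}$ after $R=\tilde{𝒪}(1)$ rounds whp.

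\emph{Step 3: the residual equals $\SA$, and conclusion.} Every anomalous key has all its buckets in $\BA$, hence is never peeled, so $\SA\subseteq C^{*}$. For the reverse inclusion — the counting lemma of \cref{sec:main-lemma} — one first runs a union bound of the kind used in \cref{lem:canonical-anomalies} to show that whp no key of $S_0\setminus\SA$ has two or more of its hashes in $\BA$ (each such event has probability $\tilde{𝒪}(1/n^2)$ and there are $𝒪(n)$ keys). Thus a non-anomalous key that survived would have $\ge k-1\ge 2$ non-blocked buckets, each of degree $\ge 2$ inside $C^{*}$; since $|C^{*}|=\tilde{𝒪}(1)$ this forces a small sub-hypergraph of a constrained shape around the $\tilde{𝒪}(1)$ buckets of $\BA$, and a second union bound over all such shapes shows whp that none exists. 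Hence $C^{*}=\SA$, and with $R$ the $\tilde{𝒪}(1)$ number of rounds after which \op{decode'} has reached $C^{*}$ we get $S_R'=\SA\subseteq\SA$, proving the lemma. I expect the main obstacle to be Step 1: one must verify honestly that forbidding the $\tilde{𝒪}(1)$ anomalous buckets — a set that itself depends on $H$ — changes neither the limiting peeling recursion nor the concentration that validates it. A naive coupling with ordinary peeling on $H$ does not suffice, because below $\ckp$ the hypergraph may still contain a linear-size component, so the ``lag'' introduced by blocking cannot be confined by a crude bounded-neighbourhood argument; one really has to revisit Molloy's round-by-round estimates with the blocked buckets present.
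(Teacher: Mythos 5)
Your overall skeleton matches the paper's (Molloy-style early rounds, then first-moment counting for the sparse regime, with the blocked buckets $\BA$ handled separately), and your Step~2 recursion $|S_{r+1}'|\le(1-c_0)|S_r'|+|\BA|$ is a legitimate variant of the paper's \cref{lem:intermediate-rounds}. But the two hard steps are left as gaps. For Step~1 you reject the bounded-neighbourhood coupling as ``crude'' and insufficient, yet that is exactly the argument that works and that the paper uses (\cref{lem:early-rounds}): for a \emph{constant} number $R$ of synchronous rounds, whether a key is removed is determined by its $R$-neighbourhood, so blocking $\BA$ can only affect keys whose $R$-neighbourhood meets $\BA$; with maximum bucket degree $\le\log n$ (\cref{lem:canonical-anomalies}\,\itemref{it:bucket-degree}) and $|\BA|=𝒪(\log^2 n)$ these are $𝒪(\log^{2+R}n)=o(n)$ keys. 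The linear-size-component objection is irrelevant over constantly many rounds, since influence propagates only distance $R$. Your alternative — rerunning Molloy's recursion with a blocked set that itself depends on $H$ and asserting the perturbation is $o(1)$ — is precisely what you flag as the unresolved ``main obstacle'', so Step~1 is not actually established in your write-up, and the simple fix was missed.

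The second and more serious gap is Step~3. Your Step~2 only shows that \op{decode'} quickly reaches its fixed point $C^*$ of size $\tilde{𝒪}(1)$; the content of the lemma is that this fixed point contains no key of $S_0\setminus\SA$, i.e.\ that peeling cannot stall while non-anomalous keys remain even though at that stage $|\BA|$ may dominate the number of surviving keys (so ``$c_0|T|-|\BA|$'' gives nothing). You reduce this to an unproved auxiliary claim (whp no key of $S_0\setminus\SA$ has two hashes in $\BA$ — itself delicate because $\BA$ is correlated with $h(x)$ and because your $\tilde{𝒪}(1/n^2)$ per-key estimate needs a second-moment control of $|\BA|$) and then to ``a second union bound over all such shapes'', which is exactly the paper's \cref{lem:late-rounds}: a joint enumeration of the surviving key set $S^*$ together with the anomalies that supply the blocked buckets of $I=h(S^*)$, yielding $|S^*|<(2|I|-|I\cap\BA|)/k$ and hence a non-blocked degree-one bucket in every non-final round. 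That counting is the most delicate estimate of the whole section (cf.\ \eqref{eq:crazy-union-bound}), and asserting its existence is not a proof; as it stands, your argument does not rule out a $\tilde{𝒪}(1)$-size residual of non-anomalous keys clinging to the anomalous buckets.
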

%
%
Before showing how this implies our main theorem we deal with a technicality concerning the implementation of $Q$ and $\QN$. As the names suggest, we have queues in mind, such as a LIFO or FIFO queue. However, since we cannot afford to check if an element is already in $\QN$ whenever we are about to add something to $\QN$ this effectively implements $Q$ and $\QN$ as multisets. A duplicated bucket $i$ in $\QN$ means a duplicated execution of the for-loop for bucket $i$ in the next round. None of the previous arguments hinge on this, but one might worry that with excessive duplication the running time gets out of hand. We are reluctant to resolve the issue by using a set data structure for $\QN$, because this would compromise the simplicity of \op{decode}. Moreover the issue can be resolved with the following simple Lemma.
\begin{lemma}
    \label{lem:copies-in-Q}
    Assume an implementation of \op{decode} realises $Q$ and $\QN$ as multisets, e.g.\ using FIFO queues.
    Then whp the following is true for all $A ∈ \A$. Together $Q$ and $\QN$ never contain more than two copies of the centre $i$ of $A$. If they contain two copies of $i$, then $i$ stores at most one key.
\end{lemma}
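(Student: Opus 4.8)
The plan is to understand when a copy of the centre $i$ of an anomaly $A$ gets inserted into $\QN$. A bucket $i$ is added to $\QN$ only when, during the processing of some key $x$ in the current round, we have $i \in h(x)$ and $\op{looksPure}(i)$ holds right after toggling $x$. So a copy of $i$ being inserted requires that (a) some key $x$ with $i \in h(x)$ is detected and toggled in the current round, and (b) at that moment bucket $i$ looks pure. The keys with $i$ in their hash set that \op{decode} ever sees are, by \cref{lem:no-surprise-keys} and \cref{lem:canonical-anomalies}\itemref{it:anomaly-intersection}, exactly the keys of $A$ together with possibly some keys of $S_0$ that also hash to $i$. By \cref{lem:canonical-anomalies}\itemref{it:bucket-degree} there are at most $\log n$ such keys total, but I will need a much sharper statement: I would argue that whp bucket $i$ stores \emph{only} keys of $A$, i.e.\ no key of $S_0 \setminus A$ hashes to $i$. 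This is the analogue of \itemref{it:anomaly-footprint}/\itemref{it:anomaly-intersection}: a native anomaly $A$ with centre $i$ together with an extra key $x \in S_0$ with $i \in h(x)$ would force $|A|$ coincident hash values plus one more, i.e.\ $\ell$ realised identities instead of $\ell-1$, which has probability $\tilde{O}(1/n)$ by a union bound of the same shape as \eqref{eq:basic-union-bound} (sum over $\ell \ge 3$ of $\binom{m}{\ell-1} k^\ell n^{-\ell+1} \cdot m k \cdot \tfrac{1}{n}$). So whp, for every $A \in \A$ with centre $i$, the set of keys ever touching $i$ is precisely $A$.

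Granting this, the dynamics at $i$ are simple to track. Let $A = \{x_1,\dots,x_\ell\}$. At any time, bucket $i$ stores $\bigoplus_{x \in A \cap S} x$. A copy of $i$ is pushed to $\QN$ during a round only if some $x_j \in A$ is toggled in that round \emph{and} immediately afterwards $|A \cap S| \le 1$ at bucket $i$ with the stored value looking pure. I would now distinguish: if $i$ never looks pure, no copies are ever added, and the claim is vacuous. Otherwise, consider the first round $r$ in which a copy of $i$ enters $\QN$. For this to happen some $x_j \in A$ was toggled in round $r$; but $x_j$ can only be toggled via one of its \emph{other} buckets $h(x_j) \setminus \{i\}$ looking pure, and those buckets lie outside $\bigcup_{A' \in \A} h(A')$ except for $i$ itself by \itemref{it:anomaly-intersection}, so the toggle of $x_j$ is a genuine removal (or the readdition handled by \cref{lem:native-anomalies-only}). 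The key point: within a single round, can two distinct keys $x_j, x_{j'} \in A$ both be toggled, each leaving $i$ looking pure afterwards, thereby pushing \emph{two} copies of $i$? Right after toggling $x_j$, bucket $i$ stores $\bigoplus_{x \in A \cap S} x$ which has $|A \cap S| - 1$ (or $+1$) keys left; for it to look pure we need this to be a single key or zero. After a second toggle in the same round it changes again. Tracking the parity, I would show that across all rounds, the number of times $i$ transitions into a "looks pure" state is at most two: once when $A \cap S$ drops from $2$ to $1$ (or is triggered), and possibly once more in the immediately following round when the last key is removed — matching the "breadth first" two-round unravelling described in the technical overview. Hence at most two copies of $i$ are ever created, and they cannot both be live unless one was created in round $r$ and the other in round $r+1$; in the interim the first copy is processed, so at any moment $Q \cup \QN$ holds at most two copies.

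The second sentence of the lemma — if two copies of $i$ are present then $i$ stores at most one key — follows because the second copy is only created when, after the toggle that created it, $|A \cap S| \le 1$ at bucket $i$, i.e.\ $i$ stores $\bigoplus_{x \in A \cap S} x$ with at most one term; and between the two relevant rounds no key of $S_0 \setminus A$ and no foreign key outside $\SA$ can be added to $i$ (again by \itemref{it:anomaly-intersection} and \cref{lem:no-surprise-keys}), and keys of other anomalies do not touch $i$, so the count at $i$ cannot increase back above one. The main obstacle I anticipate is the bookkeeping in the paragraph above: carefully pinning down, round by round, the sequence of toggles of keys in $A$ and showing the "looks pure" state of $i$ is entered at most twice, handling the mild annoyance that a key $x_j \in A$ could be a foreign key toggled once to add it and once to remove it. I would handle this by noting that by \cref{lem:native-anomalies-only} all such readditions are governed by $A$ itself (no foreign anomaly interferes), so the entire life of bucket $i$ is a deterministic function of the order in which the $\ell$ keys of $A$ are removed via their non-centre buckets, and that order involves each key at most twice, giving a bounded, explicitly analysable transcript.
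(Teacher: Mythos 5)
There is a genuine gap: your entire plan rests on the claim that, whp, no key of $S_0 \setminus A$ hashes to the centre $i$ of a native anomaly $A$, so that the contents of bucket $i$ are a function of $A \cap S$ alone. That claim is false. Conditioned on an anomaly with centre $i$ existing, each of the $m=\Theta(n)$ other keys of $S_0$ hits $i$ with probability about $k/n$, so the expected number of extra keys stored in the centre is $\Theta(km/n)=\Theta(1)$; equivalently, in your own union bound the extra factor $mk\cdot\tfrac1n$ is $\Theta(1)$, not $O(1/n)$, so the sum is $\Theta(1)$ rather than $\tilde{\mathcal{O}}(1/n)$. (This is exactly why \cref{lem:canonical-anomalies} only controls hash collisions \emph{among} anomaly keys and \emph{between} anomalies, never between an anomaly and the rest of $S_0$.) With constant probability some centre also stores non-anomalous keys, the bucket value is no longer $\bigoplus_{x\in A\cap S}x$, and the ``deterministic transcript of the $\ell$ keys of $A$'' that your second and third paragraphs analyse does not exist. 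A secondary problem: even granting your claim, the assertion that $i$ enters the looks-pure state at most twice overall is not correct --- $A$ can be re-triggered many times (trigger, anomalous step re-adds the missing key, another key of $A$ is later removed elsewhere, re-trigger, \dots), so the total number of pushes of $i$ is not bounded by two; only the number of copies simultaneously present is.

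The paper avoids both issues with a bookkeeping invariant that never needs to know which keys besides $A$ hash to $i$: a copy of $i$ can be pushed only when $i$ looks pure, which (by \cref{lem:native-anomalies-only} and \cref{lem:canonical-anomalies}~\itemref{it:anomaly-intersection}) means either $A$ is triggered, i.e.\ $i$ stores exactly $|A|-1\ge 2$ keys of $A$ and nothing else, or $i$ genuinely stores a single key. Between two triggerings a key must be \emph{added} at $i$, and the only way a key addition can touch $i$ is an anomalous decoding step at $i$ itself, which consumes a copy of $i$ from $Q$; the single-key reason can fire only once more, since after that no additions at $i$ are possible ($|A|\ge 3$). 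This consumption argument bounds the simultaneous number of copies by two and directly yields the second sentence of the lemma. If you want to salvage your approach, you would have to replace the false ``no extra keys at the centre'' claim by an accounting of this kind.
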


\begin{proof}
    Since no anomaly other than $A$ affects $i$ by \cref{lem:canonical-anomalies} \itemref{it:anomaly-intersection} and \cref{lem:native-anomalies-only}, there are only two reasons for adding $i$ to $\QN$:
    \begin{enumerate}[(i)]
        • The anomaly $A$ is triggered, meaning the state of $i$ changed such that $i$ now stores $|A|-1$ keys from $A$ and no other key, or
        • $i$ stores only a single key.
    \end{enumerate}
    Reason \lipicsLabel{(i)} may occur several times but the necessary state change in between two occurrences must include the addition of a key and the removal of a key. A key can only be added to $i$ due to an anomalous decoding step at $i$, which consumes a copy of $i$ from $Q$, maintaining the invariant. While reason \lipicsLabel{(ii)} may push a second copy of $i$ into $\QN$, this can only happen once since no anomalous decoding steps can add keys to $i$ afterwards (recall that $|A| ≥ 3$).
\end{proof}

\begin{proof}[Proof of \cref{thm:IBF-w/o-checksums}]
    We may assume that the high probability events from all previous lemmas hold. Let $R = \tilde{𝒪}(1)$ be the number of rounds from \cref{lem:peel-all-but-anomalies} needed until $S_R' ⊆ \SA$. Since $S_R ∩ S₀ ⊆ S_R'$ by \cref{lem:decode-vs-decode-prime} and $S_R \setminus S₀ ⊆ \SA$ by \cref{lem:no-surprise-keys} we have $S_R ⊆ \SA$ as well, i.e.\ only anomalous keys might remain after $R$ rounds of \op{decode}. We now show that two more rounds suffice (i.e.\ $S_{R+2} = ∅$) by showing $S_{R+2} ∩ A = ∅$ for any $A ∈ \A$.

    Since native anomalies have non-overlapping domain $h(A)$ we may consider each $A$ in isolation. Let $i$ be the centre of $A$. 
    Consider the beginning of round $R+1$ (when $\QN = ∅$).
    If $Q$ contains two copies of $i$, then by \cref{lem:copies-in-Q} we have $|S_R∩A| = 1$ and this single key is clearly removed in the next round. Otherwise \cref{lem:copies-in-Q} guarantees that there is at most one copy of $i$ in $Q$. Each $x ∈ A$ is the only key stored in the $k-1$ buckets $h(x) \setminus \{i\}$ by \cref{lem:canonical-anomalies} \itemref{it:anomaly-footprint}. These buckets “\op{lookPure}” and are hence all contained in $Q$. Therefore, every $x ∈ A$ is removed within round $R+1$ (at least once). When bucket $i$ is processed, at most one key from $A$ is added. This leaves us with at most one key from $A$ after $R+1$ rounds and hence no key from $A$ after $R+2$ rounds.
    This concludes the proof that $S = ∅$ after $\tilde{𝒪}(1)$ rounds of \op{decode} and hence that \op{decode} terminates with $\Sdec = S₀$ whp.

    A final issue is the running time. We assume $Q$ and $\QN$ are implemented as queues. Let $a$ be the total number of anomalous decoding steps. By \cref{lem:logn-anomalies} there are $\tilde{𝒪}(1)$ native anomalies whp, by \cref{lem:native-anomalies-only} no other anomalies are ever triggered whp and by \cref{lem:copies-in-Q} each anomaly can lead to at most one anomalous decoding step per round whp. Hence $a = \tilde{𝒪}(1)$ whp. Since regular decoding steps remove a key and anomalous steps add a key, there are $m+a$ regular steps, giving $m + 2a$ decoding steps in total whp. The total number of entries added to $Q$ and $\QN$ is then at most $n+(k-1)(m+2a) = 𝒪(n)$, which accounts for $n$ additions before the while-loop and $k-1$ additions per decoding step. Since every iteration of the for-loop consumes an element from $Q$, there are $𝒪(n)$ for-loop iterations whp.
\end{proof}

\subsection{Analysis of \op{decode'}}
\label{sec:main-lemma}

The arguments used in the following proof of \cref{lem:peel-all-but-anomalies} should not be regarded as completely novel. The fact that most keys can be removed is closely related to the analysis of cores in hypergraphs as discussed by Molloy \cite{Molloy05:Cores-in-random-hypergraphs} and the required number of rounds of peeling has been studied in a similar case in more detail by Jiang, Mitzenmacher and Thaler \cite{JMT:ParallelPeeling:2016} who prove that $Θ(\log \log n)$ rounds are necessary and sufficient. We adapt these existing works to our setting with anomalies.

We recall some facts about hypergraph peeling closely related to our setting. The hypergraph to consider here is $H = ([n],\{h(x) \mid x ∈ S₀\})$. To avoid parallel terminologies, we continue to call $i ∈ [n]$ a bucket (rather than a vertex) and speak of a key $x$ (effectively referring to the hyperedge $h(x)$). We do however adopt graph theoretic notions such as the \emph{incidence} of a bucket $i$ to a key $x$ (meaning $i ∈ h(x)$), the degree of a bucket (its number of incidences) or the $r$-neighbourhood of a bucket or key (the set of buckets and keys reachable by traversing at most $r$ hyperedges).

The peeling process on $H$ proceeds in rounds. In every round the set of buckets $B₁ ⊆ [n]$ of degree $1$ is determined. Then all keys incident to a bucket from $B₁$ are removed. This may cause further buckets to lose incidences, creating new buckets of degree $1$, which are then handled in the next round. If this process eventually removes all keys, then the original $H$ is called \emph{peelable}. Peelability is for instance exploited to decode IBLTs \cite{GM:IBLT:2011}, construct error correcting codes \cite{LMSS:Efficient_Erasure:2001} and to solve random linear systems to construct perfect hash functions \cite{BPZ:Practical:2013} or Bloom filter alternatives~\cite{GL:XorFilters:2020}.

A density threshold $\cpk$ for peelability is known, meaning a fully random $k$-uniform hypergraph (like $H$ above) is peelable whp if $\frac{m}{n} < \cpk-ε$ and not peelable whp if $\frac{m}{n} > \cpk + ε$ \cite{Molloy05:Cores-in-random-hypergraphs}. Moreover, the following lemma guarantees that below the threshold a constant number of rounds suffice to remove most keys whp:
\begin{lemma}[Molloy {\cite[Lemma 3]{Molloy05:Cores-in-random-hypergraphs}}]
    \label{lem:molloy}
    For any $ε,δ > 0$, there exists $R ∈ ℕ$ such that after peeling a $k$-uniform hypergraph with hyperedge density $\frac{m}{n} < \cpk-ε$ for $R$ rounds at most $δn$ hyperedges remain whp.\footnote{Strictly speaking, Molloy's Lemma only claims a probability of $1-o(1)$. However, the tool utilised in his proof (Azuma's inequality in his Lemma 7) is strong enough to support our “\emph{whp}”.}
\end{lemma}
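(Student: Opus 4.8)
The plan is to reproduce Molloy's analysis of the peeling process (the statement is \cite[Lemma 3]{Molloy05:Cores-in-random-hypergraphs}): couple the local structure of $H = ([n],\{h(x)\mid x\in S₀\})$ with a deterministic recursion, and then concentrate. First I would pass to the local weak limit of $H$ in the regime $\tfrac mn = c$: the Galton--Watson tree in which every bucket has $\mathrm{Poisson}(ck)$ further incident keys and every key has $k-1$ further incident buckets (the root key has $k$). For a key reached from its parent bucket let $s_r$ be the probability it still survives $r$ rounds of peeling when only its subtree is considered; with base case $s_0=1$ and unrolling one round,
\[ s_{r+1} = \bigl(1 - \e^{-ck\,s_r}\bigr)^{k-1}, \]
since such a key survives round $r+1$ iff each of its $k-1$ downward buckets retains a surviving child, and a $\mathrm{Poisson}(ck)$-degree bucket retains one with probability $1-\e^{-cks_r}$. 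The map $s\mapsto(1-\e^{-cks})^{k-1}$ is increasing, so $(s_r)$ decreases monotonically to the largest fixed point $s^\star\in[0,1]$, and the role of the hypothesis $\tfrac mn<\cpk-\varepsilon$ is precisely that $\cpk$ is the threshold below which $s^\star=0$ — this is the standard $2$-core computation of \cite{Molloy05:Cores-in-random-hypergraphs}. Hence the limiting probability that a \emph{root} key survives $r$ rounds, $\rho_r := (1-\e^{-ck\,s_{r-1}})^k$, tends to $0$, and I would fix a constant $R=R(\varepsilon,\delta)$ with $\rho_R < \delta/2$.

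Next I would transfer this to $H$. Whether a given key survives $R$ rounds of peeling is a functional of its bounded-radius neighbourhood in $H$, and this neighbourhood converges in distribution to the tree above (the possibility that some $h(x)$ is a proper multiset affects only $\OO(1)$ keys in expectation and changes nothing for a bound of the form $\delta n$). Therefore the probability that a fixed key survives $R$ rounds tends to $\rho_R$, so the number $Z$ of survivors after $R$ rounds satisfies $\mathbb{E}[Z] = (\rho_R+o(1))m \le \tfrac34\delta n$ for large $n$. For concentration, $Z$ is a function of the $km=\OO(n)$ independent hash values $(h_j(x))$, which I would expose one at a time to form a Doob martingale; changing a single value only perturbs the peeling dynamics within bounded distance of one incidence, so on the whp event that every bounded-radius neighbourhood in $H$ spans at most $\log^2 n$ keys the martingale has increments $\OO(\log^2 n)$, and Azuma's inequality gives $\Pr[Z>\mathbb{E}[Z]+\tfrac14\delta n] \le \exp(-\Omega(n/\log^4 n)) = \tilde{\OO}(1/n)$. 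Together with the good-neighbourhood event (also whp) this yields $Z\le\delta n$ whp.

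The hard part will be this concentration step: the naive bounded-difference constant for $Z$ is $\Theta(n)$, which is useless, so the argument genuinely relies on $R$ being a constant depending only on $\varepsilon$ and $\delta$, which is what lets one confine the influence of a single hash value to a polylogarithmic neighbourhood before applying Azuma. The first two steps are the routine $2$-core computation; and \cite{JMT:ParallelPeeling:2016} in fact shows $\Theta(\log\log n)$ rounds suffice to finish peeling entirely, though for this lemma only the constant-round, constant-fraction statement is needed. An alternative, if a self-contained argument is not desired, is to cite \cite[Lemma 3]{Molloy05:Cores-in-random-hypergraphs} for the ``$1-o(1)$'' version and add only the martingale tail bound above — exactly the Azuma estimate of \cite[Lemma 7]{Molloy05:Cores-in-random-hypergraphs}, as the footnote to the lemma indicates — to upgrade it to whp.
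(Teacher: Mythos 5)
The paper does not actually prove this lemma: it is imported verbatim from Molloy \cite[Lemma 3]{Molloy05:Cores-in-random-hypergraphs}, with the footnote only observing that the Azuma argument in Molloy's Lemma~7 already gives the stronger ``whp'' bound --- which is precisely the fallback you offer in your closing paragraph. Your self-contained sketch (the tree/density-evolution recursion $s_{r+1}=(1-\e^{-ck\,s_r})^{k-1}$ together with the fact that below $\cpk$ its only fixed point is $0$, transfer to $H$ by locality of $R$-round peeling, then an exposure martingale whose differences are polylogarithmic because $R$ is a constant) is essentially Molloy's own argument, so it is consistent with the cited source; the only soft spots are routine --- radius-$R$ neighbourhoods are $\log^{O(R)}n$ rather than $\log^2 n$ in general, and applying Azuma only on a ``good event'' requires the standard typical-bounded-differences bookkeeping --- neither of which affects the conclusion.
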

Our algorithm \op{decode'} behaves almost exactly like a peeling algorithm.
The only substantial difference is that the buckets from $\BA$ are never considered regardless of their degree. As it turns out, this disturbance is too weak to affect the guarantee given in \cref{lem:molloy}, as we show now.

\begin{lemma}
    \label{lem:early-rounds}
    For any $δ > 0$, there exists $R ∈ ℕ$ such that $|h(S_{R}')| ≤ δn$ whp.
\end{lemma}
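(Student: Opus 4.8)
The plan is to reduce the behaviour of \op{decode'} to the ordinary peeling process on $H = ([n],\{h(x)\mid x\in S_0\})$ analysed by Molloy, and to argue that blocking the $\tilde{\mathcal O}(1)$ anomalous buckets in $\BA$ can slow peeling down by at most a bounded amount in each round. Recall that $\BA = h(\SA)$ and that $|\SA| = \mathcal O(\log n)$ whp by \cref{lem:logn-anomalies} combined with \cref{lem:canonical-anomalies}\itemref{it:anomaly-size}, so $|\BA| = \mathcal O(\log^2 n) = \tilde{\mathcal O}(1)$ whp; I will condition on this event throughout. The key observation is that \op{decode'} run for $R$ rounds removes \emph{at least} as many keys as the following hypothetical process: run unrestricted peeling on $H$, but after each round delete from consideration any key that is incident to a bucket in $\BA$. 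Equivalently, if $P_R$ denotes the set of keys surviving $R$ rounds of unrestricted peeling on $H$, then the set of keys surviving $R$ rounds of \op{decode'} is contained in $P_R \cup N_R(\BA)$, where $N_R(\BA)$ is the set of keys within peeling-distance $R$ of $\BA$ — these are the only keys whose peeling could have been ``prevented'' by the ban.

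First I would make this domination precise by induction on the round number: a bucket that has degree $1$ in the \op{decode'} state after $r$ rounds, and that does not lie in $\BA$, also has degree $\le 1$ in the unrestricted-peeling state after $r$ rounds (since \op{decode'} has removed a subset of the keys that unrestricted peeling has), so every key removed by unrestricted peeling in round $r+1$ is either removed by \op{decode'} in round $r+1$ or is incident to a bucket of $\BA$. Iterating, every key in $S_0 \setminus S_R'$ contains no key of $P_R$ unless that key is in $N_R(\BA)$; more carefully, $S_R' \subseteq P_R \cup N_R(\BA)$. Then I would bound the two pieces separately: by \cref{lem:molloy}, for any $\delta>0$ there is a constant $R$ with $|P_R| \le \tfrac{\delta}{2} n$ whp; and since $R$ is a constant, $k$ is a constant, and \cref{lem:canonical-anomalies}\itemref{it:bucket-degree} bounds every bucket degree by $\log n$ whp, the $R$-neighbourhood of a single bucket contains at most $(\log n)^{\mathcal O(R)} = \tilde{\mathcal O}(1)$ keys, so $|N_R(\BA)| \le |\BA|\cdot(\log n)^{\mathcal O(R)} = \tilde{\mathcal O}(1) = o(n)$ whp. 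Adding the two bounds gives $|h(S_R')| = |S_R'| \le \delta n$ whp for large $n$, since $|h(S_R')|$ is just the number of surviving keys.

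The main obstacle is getting the inductive domination statement exactly right, since \op{decode'} and the unrestricted peeling process have different sets of ``active'' buckets in every round and one must be careful that the $\BA$-ban never causes \op{decode'} to be arbitrarily far behind — it can only fail to peel a key when that key currently touches a banned bucket, and such a key was, at that moment, within distance (number of rounds elapsed) of $\BA$ in the \emph{original} hypergraph because all intermediate peeling only deletes keys. Once the round count $R$ is fixed as the constant from \cref{lem:molloy} for the target $\delta/2$, the neighbourhood bound is a routine branching estimate using the degree bound from \cref{lem:canonical-anomalies}\itemref{it:bucket-degree}, and the rest is a union of two high-probability events.
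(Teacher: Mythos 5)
Your proposal follows essentially the same route as the paper: apply Molloy's lemma (\cref{lem:molloy}) for a constant number $R$ of rounds, then use locality of peeling plus the degree bound of \cref{lem:canonical-anomalies}\itemref{it:bucket-degree} and $|\BA| = \mathcal{O}(\log^2 n)$ from \cref{lem:logn-anomalies} to argue that only $\tilde{\mathcal{O}}(1)$ keys near $\BA$ can be ``protected'' by the ban, your explicit inductive domination $S_R' \subseteq P_R \cup N_R(\BA)$ being just a spelled-out version of the paper's locality remark. The one slip is the final sentence: $|h(S_R')|$ is a set of \emph{buckets}, so it is not equal to $|S_R'|$ but only bounded by $k|S_R'|$; to get $|h(S_R')| \le \delta n$ you should apply \cref{lem:molloy} with target $\tfrac{\delta}{2k}n$ (as the paper does) rather than $\tfrac{\delta}{2}n$ --- a trivial fix that does not affect the structure of the argument.
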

\begin{proof}
    The density condition $\frac{m}{n} < \ckp-ε$ is part of the requirement of \cref{thm:IBF-w/o-checksums} – the context in which we are operating. Without the complication of anomalous buckets we could obtain a constant $R$ such that $|S_{R}'| ≤ \frac{δ}{2k}n$ whp by \cref{lem:molloy}. Since peeling is a local algorithm, a key $x$ is only affected by the restriction regarding $\BA$ if there is some $i ∈ \BA$ within the $R$-neighbourhood of $x$. Since the maximum degree of any bucket is at most $\log n$ whp by \cref{lem:canonical-anomalies} and because $|\BA| ≤ k|\SA| ≤ k \log(n)|\A| = 𝒪(\log²(n))$ by \cref{lem:logn-anomalies} there are whp at most $𝒪(\log^{2+R}(n))$ buckets in the $R$-neighbourhoods of buckets in $\BA$ that could be affected in this way. We obtain $|S_{R}'| ≤ \frac{δ}{2k}n + 𝒪(\log^{2+R}(n)) ≤ \frac{δ}{k}n$ whp and therefore $|h(S_{R}')| ≤ k|S_{R}'| ≤ δn$ whp as desired.
\end{proof}
For a set $I ⊆ [n]$ of buckets let $S_I := \{ x ∈ S₀ \mid h(x) ⊆ I\}$ be the set of keys \emph{induced} by $I$.
\begin{lemma}
    \label{lem:intermediate-rounds}
    There exist constants $N∈ℕ$ and $δ,γ > 0$ such that, whp, any set $I ⊆ [n]$ of buckets with $N ≤ |I| ≤ δn$ satisfies $|S_I| ≤ (2-γ)|I|/k$.
\end{lemma}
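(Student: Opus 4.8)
The plan is a first-moment / union-bound argument over all candidate bucket sets $I$ together with all candidate sets of induced keys; the same kind of Stirling estimate already used throughout \cref{sec:isolating-anomalies} will do the work. Fix constants $\gamma,\delta>0$ and $N\in\mathbb N$, to be pinned down at the very end, and fix a size $a$ with $N\le a\le\delta n$. Put $t:=\lceil(2-\gamma)a/k\rceil$, so $t\ge(2-\gamma)a/k$. If some $I$ with $|I|=a$ had $|S_I|>(2-\gamma)a/k$, then $|S_I|\ge t$ and hence there is a set $T\subseteq S_0$ of exactly $t$ keys with $h(x)\subseteq I$ (i.e.\ all $k$ hash values of $x$ falling into $I$) for every $x\in T$. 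For a fixed pair $(I,T)$ the $kt$ hash values involved are independent and each lies in $I$ with probability $a/n$, so $\Pr[h(T)\subseteq I]=(a/n)^{kt}$; summing over the $\binom na$ choices of $I$ and the $\binom mt$ choices of $T$ gives
\[
  \Pr\big[\exists\,I,\ |I|=a,\ |S_I|>(2-\gamma)a/k\big]\ \le\ \binom{n}{a}\binom{m}{t}\Big(\tfrac an\Big)^{kt}.
\]

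Next I would estimate the right-hand side with $\binom na\le(en/a)^a$, $\binom mt\le(em/t)^t$, and $m=cn$ where $c=\tfrac mn<1$ by hypothesis. Writing $\beta:=(2-\gamma)/k$, one checks that for $\delta$ small the value $t$ lies on the decreasing branch of $s\mapsto(em/s)^s(a/n)^{ks}$, so replacing $t$ by $\beta a$ only weakens the estimate; collecting the resulting powers of $a/n$ turns the bound into $\big[C_0\,(a/n)^{\eta}\big]^a$ for a constant $C_0=C_0(k,c,\gamma)$ and exponent
\[
  \eta\ :=\ 1-\gamma-\tfrac{2-\gamma}{k}\ =\ \tfrac{(k-2)-(k-1)\gamma}{k}.
\]
The one thing that actually matters here — and the place where the hypothesis $k\ge3$ is used — is that $\eta>0$: for $k\ge3$ we get $\eta\ge\tfrac13-\tfrac23\gamma$, which is positive as soon as $\gamma$ is small. (This positive slack is exactly what separates us from the $2$-core threshold; informally, $k|S_I|\le(2-\gamma)|I|$ forces some bucket of $I$ to have degree $\le1$ among the induced keys, which is what \op{decode'} will exploit, but that reading is not needed for the present proof.)

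With this in hand I would fix the constants in order: first $\gamma>0$ small enough that $\eta>0$; then $\delta>0$ small enough that both the decreasing-branch condition holds and $\log C_0+\eta\log\delta+\eta<0$, which makes $a\mapsto[C_0(a/n)^{\eta}]^a$ geometrically decreasing on $[N,\delta n]$; and finally $N$ large enough that $\eta N\ge2$, so that the leading ($a=N$) term is $[C_0(N/n)^{\eta}]^N=(C_0N^{\eta})^N\,n^{-\eta N}=O(n^{-2})$. Summing the geometrically decaying terms over $N\le a\le\delta n$ yields total failure probability $O(n^{-2})=\tilde{O}(1/n)$, so whp every $I$ with $N\le|I|\le\delta n$ satisfies $|S_I|\le(2-\gamma)|I|/k$, as claimed.

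I expect the only genuinely fiddly parts to be bookkeeping: keeping the quantifier order right when choosing $\gamma,\delta,N$ (it must be $\gamma$, then $\delta$, then $N$), and justifying the harmless-looking steps — the ceiling in $t$, the multiset reading of ``$h(x)\subseteq I$'', and the branch argument that legitimises replacing $t$ by $(2-\gamma)a/k$. The algebra itself is routine and parallels the computations around \eqref{eq:basic-union-bound}.
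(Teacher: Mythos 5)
Your proposal is correct and follows essentially the same route as the paper's proof: the identical union bound $\binom{n}{i}\binom{m}{s}(i/n)^{ks}$ over bucket sets and induced key sets, the same Stirling estimates yielding a bound of the form $[C(i/n)^{\eta}]^{i}$ with $\eta=\frac{(k-2)-(k-1)\gamma}{k}>0$ from $k\ge 3$, and a final summation over sizes (the paper splits at $\sqrt{n}$ rather than invoking geometric decay, and fixes $\gamma=\frac14$, $N=18$ explicitly, but these are only bookkeeping differences).
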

\begin{proof}
    We start by bounding the probability $p_i$ that there exists a set $I$ of size $i := |I|$ with $|S_I| ≥ s$ where $s := s(i) := ⌈(2-γ)i/k⌉$, using a union bound. At the line break we use that $i = Θ(s)$ and that hence $\e^{i+s}(\frac{s}{i})^s ≤ C^i$ for a suitable constant $C$. We also use $k ≥ 3$ and choose $γ = \frac{1}{4}$.
    \begin{align*}
        p_i &≤ \binom{n}{i} \binom{m}{s} \pfrac{i}{n}^{ks}
        ≤ \pfrac{ne}{i}^{i} \pfrac{ne}{s}^s \pfrac{i}{n}^{ks}
        = \e^{i+s}\pfrac{i}{s}^s · \pfrac{n}{i}^i \pfrac{n}{i}^s \pfrac{i}{n}^{ks}\\
        &≤ C^{i} \pfrac{n}{i}^i \pfrac{i}{n}^{(k-1)(2-γ)i/k}
        ≤ C^{i} \pfrac{i}{n}^{-i+\frac 23(2-γ)i}
        ≤ C^{i} \pfrac{i}{n}^{(\frac 13 - \frac 23γ)i}
        ≤ C^{i} \pfrac{i}{n}^{i/6}
        = \pfrac{iC^6}{n}^{i/6}.
    \end{align*}
    We fix $N := 18$ and $δ := C^{-6}/2$. This allows for the following union bound on the probability that a set $I$ of some size $N ≤ |I| ≤ δn$ induces $s$ or more keys.
    \begin{align}
        \sum_{i = N}^{δn} p_i
        &≤ \sum_{i = N}^{δn} \pfrac{iC^6}{n}^{i/6}
        = \sum_{i = N}^{\sqrt{n}}    \pfrac{iC^6}{n}^{i/6}
        + \sum_{i = \sqrt{n}+1}^{δn} \pfrac{iC^6}{n}^{i/6}\label{eq:simple-union-bound-final-step}\\
        &≤ \sum_{i = N}^{\sqrt{n}}    \pfrac{\sqrt{n}C^6}{n}^{N/6}
        +  \sum_{i = \sqrt{n}+1}^{δn} \pfrac{δnC^6}{n}^{\sqrt{n}/6}
        ≤ \sqrt{n}·\pfrac{C^6}{\sqrt{n}}^{3}
        +  n·\pfrac{1}{2}^{\sqrt{n}/6} = 𝒪(1/n).\qedhere\notag
    \end{align}
\end{proof}

\begin{lemma}
    \label{lem:late-rounds}
    There exists a constant $δ > 0$ such that the following holds whp. For any set of keys $S^* ⊆ S₀ \setminus \SA$ and $I := h(S^*)$, $i := |I|$, $i ≤ δn$, $s := |S^*|$ we have $s < (2i-|I ∩ \BA|)/k$.
\end{lemma}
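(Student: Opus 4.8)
The plan is to run essentially the same union bound as in the proof of Lemma \ref{lem:intermediate-rounds}, but now exploiting the extra savings that come from the requirement $S^* \subseteq S₀ \setminus \SA$ and from the corrective term $|I ∩ \BA|$. Fix $I ⊆ [n]$ with $i := |I| ≤ δn$ and a candidate set $S^* ⊆ S₀ \setminus \SA$ with $h(S^*) = I$ and $s := |S^*|$. Suppose for contradiction that $s ≥ (2i - |I ∩ \BA|)/k$, i.e.\ $ks + |I ∩ \BA| ≥ 2i$. Write $b := |I ∩ \BA|$. I would first dispose of the case where $b$ is ``large'' relative to $i$: since $|\BA| = 𝒪(\log^2 n)$ whp by (the proof of) Lemma \ref{lem:early-rounds}, we have $b ≤ |\BA| = 𝒪(\log^2 n)$, so for $i$ at least a sufficiently large constant, $b ≤ γ i / k$ for our chosen $γ$; then $ks ≥ 2i - b ≥ (2-γ)i$ and we are exactly back in the situation of Lemma \ref{lem:intermediate-rounds}, which has already been ruled out whp (for $i ≥ N$). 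It remains to handle small $i$, i.e.\ $i < \max\{N, O(\log^2 n)\}$; but here $S^* ⊆ S₀ \setminus \SA$ is decisive.

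The key point for small $i$ is that $S^*$ avoids anomalous keys, and more importantly that keys in $S^*$ cannot ``collude'' the way anomalous keys do. Concretely, I would argue: if $s$ keys from $S₀$ all have their $k$ hashes inside a bucket set $I$ of size $i$, and these keys are \emph{not} part of any native anomaly, then in particular no subset of them forms an anomaly, so there is a genuine ``excess of incidences'' that must be paid for by coincidences among hash values. The counting is: once we fix which $i$ buckets form $I$ (at most $\binom{n}{i}$ ways) and which $s$ keys form $S^*$ (at most $\binom{m}{s}$ ways), the probability that all $ks$ hash values land in $I$ is $(i/n)^{ks}$. This is exactly the bound $p_i$ computed in Lemma \ref{lem:intermediate-rounds}, \emph{except} that there $s = ⌈(2-γ)i/k⌉$ whereas here $s$ can be as small as $(2i-b)/k$. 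Since $b ≤ |\BA|$ is only polylogarithmic, for $i = Ω(\log^2 n)$ we have $s ≥ (2-γ)i/k$ and the bound $p_i ≤ (iC^6/n)^{i/6}$ applies verbatim; summing over $i$ from (a large constant) up to $δn$ gives $𝒪(1/n)$ just as in \eqref{eq:simple-union-bound-final-step}. For the finitely many remaining small values of $i$ one uses that $i ≤ O(\log^2 n)$ together with the crude bound $\binom{n}{i}\binom{m}{s}(i/n)^{ks} ≤ n^i \cdot m^s \cdot (i/n)^{ks} ≤ n^{i - (k-1)s}(i)^{ks}$ and the inequality $ks ≥ 2i - b ≥ 2i - O(\log^2 n)$, so that $i - (k-1)s ≤ i - \tfrac{k-1}{k}(2i - O(\log^2 n)) ≤ -\tfrac{i}{k} + O(\log^2 n)$ with room to spare once $i$ is a large enough constant multiple of $\log^2 n$; this is again $𝒪(1/n)$ after summing.

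A cleaner way to organise the whole argument, which I would probably prefer in the write-up, is to absorb the correction directly: define $s'(i) := ⌈(2i - |I∩\BA|)/k⌉$ and bound, via the same union bound,
\begin{align}
    \Pr[\exists I, S^*:\ |I| = i,\ h(S^*) = I,\ S^* ⊆ S₀ \setminus \SA,\ |S^*| ≥ s'(i)]
    &≤ \binom{n}{i}\binom{m}{s'(i)}\pfrac{i}{n}^{k s'(i)}. \label{eq:late-rounds-ub}
\end{align}
Then split the sum over $i$ at a threshold $i₀ = Θ(\log^2 n)$: for $i ≥ i₀$ the polylogarithmic term $|I∩\BA| ≤ |\BA|$ satisfies $|I∩\BA| ≤ γ i$, so $s'(i) ≥ (2-γ)i/k$ and the estimate from Lemma \ref{lem:intermediate-rounds} applies term by term; for $i < i₀$ use that $k s'(i) ≥ 2i - |\BA|$ and the crude bounds above. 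Both partial sums are $𝒪(1/n)$.

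\textbf{Main obstacle.} The delicate point is the regime of small $i$ (comparable to $|\BA|$), where the correction term $|I ∩ \BA|$ in the statement is genuinely needed and one cannot simply reduce to Lemma \ref{lem:intermediate-rounds}. There one must use carefully that $S^*$ contains no anomalous keys — so that there is no ``free'' collection of keys summing to zero and sharing a bucket — and that $|\BA|$ is only polylogarithmic, to make the naive counting bound \eqref{eq:late-rounds-ub} still beat $1/n$. Getting the constants ($δ$, the threshold $i₀$, the implied constant in $|\BA| = 𝒪(\log^2 n)$) to line up is the technical heart of the proof; the large-$i$ regime is essentially a rerun of the previous lemma.
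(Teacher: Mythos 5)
Your reduction of the large-$i$ regime to \cref{lem:intermediate-rounds} (using $|I\cap\BA|\le|\BA|=𝒪(\log^2 n)\le\gamma i$ once $i=\Omega(\log^2 n)$) is sound and matches the spirit of the paper's treatment of intermediate rounds. The gap is in the small-$i$ regime, which is precisely where the correction term $|I\cap\BA|$ carries the content of the lemma, and your sketch there does not work. First, the ``cleaner'' bound $\Pr[\exists I,S^*\colon\dots]\le\binom{n}{i}\binom{m}{s'(i)}\pfrac{i}{n}^{ks'(i)}$ is not a valid union bound: $s'(i)=\lceil(2i-|I\cap\BA|)/k\rceil$ depends on the random set $\BA$, and the right-hand side never charges for the event that $I$ contains many anomalous buckets. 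When $|I\cap\BA|$ is comparable to $i$ (which is exactly what can happen when $i=\tilde{𝒪}(1)$), the threshold degenerates to $s'(i)\approx i/k$, and configurations with $s\ge i/k$ are not rare at all --- every key $x$ satisfies $|h(x)|\le k$, so e.g.\ a single non-anomalous key all of whose buckets lie in $\BA$ would already violate the claimed inequality. The only reason this does not occur is the improbability of hashing into the polylogarithmically small set $\BA$, a cost your bound never pays. Second, your explicit estimate for $i<i_0$ exposes the same problem: with $ks\ge 2i-b$ one only gets an exponent of the form $-\tfrac{k-2}{k}i+\tfrac{k-1}{k}b$, which is positive (of order $\log^2 n$) throughout that regime, so the bound is $n^{\Theta(\log^2 n)}$ rather than $𝒪(1/n)$; your own caveat ``once $i$ is a large enough constant multiple of $\log^2 n$'' contradicts the case assumption $i<i_0=\Theta(\log^2 n)$. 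A further subtlety you gloss over is that one cannot simply condition on $\BA$ and treat the hashes of $S^*\subseteq S_0\setminus\SA$ as independent of it: membership in $\SA$ is itself determined by the hash functions, so the conditioning is entangled with the very randomness being used.

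The missing idea, which is the heart of the paper's proof, is to union bound \emph{jointly} over $I$, the induced keys, and the collection of native anomalies responsible for $I\cap\BA$: one enumerates $d\le a$ disjoint anomalies, for each its size $\ell_j$, its $\ell_j-1$ defining keys, and which of its hashes hit the centre and fall into $I$, and then multiplies the probabilities $(\tfrac{i}{n})^{ks}$ (induced keys), $n^{-\ell_j+1}$ (the anomaly exists) and $(\tfrac{i}{n})^{a_j}$ (its $a_j$ hashes land in $I$). This yields $p_{i,a}\le (C')^i\pfrac{i}{n}^{(k-1)s+a-i}$ with exponent at least $i/3+a/k$, after which the sum over $a$ is geometric and the sum over $i$ is handled as in \cref{lem:intermediate-rounds}. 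It is exactly the factor $(\tfrac{i}{n})^{a}$, absent from your argument, that compensates for relaxing the threshold from $2i/k$ to $(2i-a)/k$ and makes the small-$i$ case go through.
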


\begin{proof}
    We use a union bound from the perspective of the set $I$. Concretely we bound for fixed $i,a ∈ ℕ$ the probability $p_{i,a}$ that there exists a set $I ⊆ [n]$ of size $i$ that satisfies $|I ∩ \BA| ≥ a$ as well as $|S_I \setminus \SA| ≥ s$ for $s := s(i,a) := \max(\frac{i}{k},⌈(2i-a)/k⌉)$, i.e.\ $I$ contains at least $a$ anomalous buckets and induces at least $s$ keys from $S₀ \setminus \SA$. It then suffices to show that the sum over all $p_{i,a}$ is $𝒪(1/n)$. Note that we may assume $s ≥ \frac{i}{k}$ because of “$I = h(S^*)$”. This ensures $s = Θ(i)$.

    We enumerate all ways in which $I$ might arise (though including some inconsistent combinations in our counting), thereby explaining \cref{eq:crazy-union-bound} below, from left to right.
    \begin{itemize}
        • There are $\binom{n}{i}$ ways to select $I$,
        • and $\binom{m}{s}$ ways to select a set of $s$ keys from $S₀ \setminus \SA$ to be induced by $I$. 
        • We specify a number $d ≤ a$ of disjoint native anomalies that are to contribute to $I ∩ \BA$ (note that by \cref{lem:canonical-anomalies} \itemref{it:anomaly-intersection} we need not worry about the possibility of intersecting native anomalies).
        • For each $j ∈ [d]$ we specify properties of the $j$th anomaly $A_j$, namely:
        \begin{itemize}
            • A number $a_j ≥ 1$ of elements from $I ∩ \BA$ that $A_j$ accounts for. These numbers should satisfy $\sum_j a_j = a$. Each $a_j$ is a lower bound on $|I ∩ h(A_j)|$.
            • The size $ℓ_j := |A_j|$ of $A_j$.
            • A set of $ℓ_j-1$ keys from $S₀$ that uniquely determine $A_j$ itself.
            • For each $ x ∈ A_j$, which of its $k$ hashes should point to the anomaly's centre.
            • And finally, which subset of the $k$ hashes of $x ∈ A_j$ must fall within $I$.
        \end{itemize}
    \end{itemize}
    The probability for such a precisely specified constellation is either $0$ (if the specification is inconsistent) or has a simple form shown in the following formula. The term $(\frac{i}{n})^{ks}$ accounts for the selected keys being induced by $I$, $n^{-ℓ_j+1}$ accounts for the keys in $A_j$ forming an anomaly and $(\frac{i}{n})^{a_j}$ accounts for the selected hashes from keys in $A_j$ actually falling into $I$. It should be clear that these three probabilities are independent. We obtain the following.
    \begin{align}
        \label{eq:crazy-union-bound}
        p_{i,a} &≤ 
        \underbrace{\tbinom{n}{i} \tbinom{m}{s} \sum_{d ≤ a}\,\, \sum_{a₁ + … + a_d = a}\,\,
        \sum_{ℓ₁ ≥ 3} \tbinom{m}{ℓ₁-1} (k2^k)^{ℓ₁} … \sum_{ℓ_d ≥ 3} \tbinom{m}{ℓ_d-1}
        (k2^k)^{ℓ_d}}_{\text{sum over all events contributing to the union bound}}
        ·\underbrace{
            (\tfrac{i}{n})^{ks} \prod_{j ∈ [d]} n^{-ℓ_j+1} (\tfrac{i}{n})^{a_j}
        }_{\text{probability of the event}}\\
        &≤\tbinom{n}{i} \tbinom{m}{s} (\tfrac{i}{n})^{ks} \sum_{d ≤ a}\,\, \sum_{a₁ + … + a_d = a}\,\,
        \prod_{j ∈ [d]} \underbrace{\sum_{ℓ_j ≥ 3}\tbinom{m}{ℓ_j-1} (k2^k)^{ℓ_j} n^{-ℓ_j+1} (\tfrac{i}{n})^{a_j}}_{(*)}\notag
    \end{align}
    Let us continue to simplify $(*)$, using the constant $C := k2^k\e^{k2^k}$.
    \begin{align*}
        (*) &≤
        \sum_{ℓ_j ≥ 3}\frac{n^{ℓ_j-1}}{(ℓ_j-1)!} (k2^k)^{ℓ_j} n^{-ℓ_j+1} (\tfrac{i}{n})^{a_j}
        ≤ \pfrac{i}{n}^{a_j} \sum_{ℓ_j ≥ 3}\frac{(k2^k)^{ℓ_j}}{(ℓ_j-1)!} ≤ C \pfrac{i}{n}^{a_j}.
    \end{align*}
    We continue the computation in \cref{eq:crazy-union-bound} using that any $a ≥ 1$ can be written as the sum of a sequence of positive integers in precisely $2^{a-1}$ ways, and $a = 0$ can be written in precisely one way as the empty sum.
    \begin{align*}
        p_{i,a} &≤ \tbinom{n}{i} \tbinom{m}{s} (\tfrac{i}{n})^{ks} \sum_{d ≤ a}\,\, \sum_{a₁ + … + a_d = a}\,\,
        \prod_{j ∈ [d]} C \pfrac{i}{n}^{a_j}
        ≤ \tbinom{n}{i} \tbinom{m}{s} (\tfrac{i}{n})^{ks} \sum_{d ≤ a}\,\, \sum_{a₁ + … + a_d = a}\,\,
        C^a \pfrac{i}{n}^{a}\\
        &≤ \tbinom{n}{i} \tbinom{m}{s} (\tfrac{i}{n})^{ks} 2^{a}
        C^a \pfrac{i}{n}^{a}
        ≤ \pfrac{\e n}{i}^i \pfrac{\e n}{s}^s \pfrac{i}{n}^{ks} 2^{a}
        C^a \pfrac{i}{n}^{a}\\
        &≤ \e^{i+s} \pfrac{i}{s}^s 2^a C^a · \pfrac{n}{i}^i\pfrac{n}{i}^s \pfrac{i}{n}^{ks} \pfrac{i}{n}^a ≤ (C')^{i} · \pfrac{i}{n}^{(k-1)s+a-i}
    \end{align*}
    where $C'$ is another constant (here we use $a ≤ i$ and $s = Θ(i)$). We bound the exponent using $s ≥ (2i-a)/k$ (and hence $ks+a ≥ 2i$) as well as $k ≥ 3$ as follows.
    \[\ (k-1)s+a-i = \tfrac{k-1}{k}(ks + a) + a/k - i ≥ \tfrac{k-1}{k}·2i + a/k - i = \tfrac{k-2}{k}i + a/k ≥ i/3+a/k.\]
    Considering that the exponent was an integer we finally obtain
    \[ p_{i,a} ≤ (C')^{i} · \pfrac{i}{n}^{⌈i/3+a/k⌉} =: q_{i,a}.\]
    To conclude the prove we need to bound $P := \sum_{1 ≤ i ≤ δn}\sum_{a ≥ 0} p_{i,a}$ for some $δ > 0$ of our choosing. The sum over $a$ is effectively a geometric sum with
    \[ P ≤ \sum_{1 ≤ i ≤ δn} \sum_{a ≥ 0} q_{i,a} ≤ \sum_{1 ≤ i ≤ δn} k·q_{i,0}\sum_{a ≥ 0} \pfrac{i}{n}^a ≤ k·\sum_{1 ≤ i ≤ δn} q_{i,0}·\frac{1}{1-i/n} ≤ \frac{k}{1-δ}·\sum_{1 ≤ i ≤ δn} q_{i,0}. \]
    To bound the sum over the $q_{i,0}$ we have to make use of the “$⌈·⌉$” for the leading terms, e.g.\ like this:
    \[ \sum_{1 ≤ i ≤ δn} q_{i,0} ≤ 𝒪(1/n) + \sum_{9 ≤ i ≤ δn} \pfrac{i(C')^3}{n}^{i/3}.\]
    To bound the remaining sum the same idea as in \cref{eq:simple-union-bound-final-step} works.
\end{proof}

\begin{proof}[Proof of \cref{lem:peel-all-but-anomalies}.]
    We assume that the high probability guarantees from \cref{lem:early-rounds,lem:intermediate-rounds,lem:late-rounds} hold. Let $N$ and $γ$ be the constants from \cref{lem:intermediate-rounds}. Moreover, \cref{lem:intermediate-rounds,lem:late-rounds} each guarantee the existence of a constant named “$δ$”. Let $δ$ be the smaller of the two and apply \cref{lem:early-rounds} for this $δ$. This yields another constant $R = R(δ)$ such that $|h(S_R')| ≤ δn$. We refer to the first $R$ rounds as the \emph{early rounds} of peeling.

    Now consider any round $r ≥ R$ such that $S_r' \setminus \SA ≠ ∅$. We define $I := h(S_r' \setminus \SA)$ and $i := |I|$. Since $S_r' ⊆ S_R'$ we have $i ≤ |h(S_R')| ≤ δn$ so we may apply both \cref{lem:intermediate-rounds,lem:late-rounds} to $I$ (see below). It is useful to distinguish three kinds of buckets in $I$.
    \def\IA{I_{\A}}
    \def\IO{I₁}
    \def\IT{I_{2+}}
    \def\iA{i_{\A}}
    \def\iO{i₁}
    \def\iT{i_{2+}}
    \begin{itemize}
        • $\IA := I ∩ \BA$: anomalous buckets in $I$.
        • $\IO := \{b ∈ I \mid ∃!(x,j) ∈ S_I × [k]\colon h_j(x) = b\}$ where “$∃!$” means “there exists exactly one”. These are buckets not in $\IA$ that have degree $1$ with respect to $S_I$.
        • $\IT := I \setminus (\IA ∪ \IO)$: Buckets not in $\IA$ that have degree $2$ or more with respect to $S_I$.
    \end{itemize}
    Denote the numbers of these buckets with $\iA, \iO, \iT$, respectively. We have
    \begin{enumerate}[(i)]
        • \label{eqit:def-i}
        $i = \iO + \iT + \iA$ by definition.
        • \label{eqit:degree}
        $k|S_I| ≥ k|S_I \setminus \SA| ≥ i + \iT$ since each $b ∈ I$ is hashed to at least once and each $b ∈ \IT$ is hashed to at least twice, both by keys from $S_I \setminus \SA$.
        • \label{eqit:interm}
        $|S_I| ≤ (2-γ)i/k$ and equivalently $2i ≥ \frac{2}{2-γ}k|S_I|$ by \cref{lem:intermediate-rounds} if $i ≥ N$.
        • \label{eqit:late}
        $|S_I \setminus \SA| < (2i-\iA)/k$ by \cref{lem:late-rounds}.
    \end{enumerate}
    We distinguish two further types of rounds depending on $i$. We show $i₁ > 0$ in both cases.
    \begin{description}
        •[Intermediate rounds with $i = ω(\log² n)$.] We compute
        \begin{align*}
            \iO &\itemrefrel{eqit:def-i}{=} i - \iT - \iA
            = 2i - i - \iT - \iA
            \itemrefrel{eqit:interm}{≥} \tfrac{2}{2-γ}k|S_I| - i - \iT - \iA\\
            &\itemrefrel{eqit:degree}{≥} \tfrac{2}{2-γ}(i + \iT) - (i + \iT) - \iA
            ≥ \tfrac{γ}{2-γ}(i + \iT) - |\BA| ≥ \tfrac{γ}{2-γ}i - 𝒪(\log² n) = Ω(i).
        \end{align*}
        In the end we used the case assumption and a bound of $𝒪(\log n)$ on both the size and the number of native anomalies that hold whp by \cref{lem:canonical-anomalies} \itemref{it:anomaly-size} and \cref{lem:logn-anomalies}.
        •[Late rounds with $i = \tilde{𝒪}(1)$.] We proceed similarly:
        \begin{align*}
            \iO &\itemrefrel{eqit:def-i}{=} i - \iT - \iA
            = (2i - \iA) - i - \iT
            \itemrefrel{eqit:late}{>} k|S_I \setminus \SA| - i - \iT
            \itemrefrel{eqit:degree}{≥} (i + \iT) - (i + \iT) = 0.
        \end{align*}
    \end{description}
    The fact that $\iO > 0$ holds in both cases guarantees a bucket $b ∉ \IA$ storing one key. In the next round of \op{decode'} at least this bucket will be cleared of its key. Progress only stops when $S_r' \setminus \SA = ∅$, i.e.\ when our goal $S_r' ⊆ \SA$ is reached. Concerning the number of rounds we have
    \begin{itemize}
        • $R = 𝒪(1)$ \emph{early} rounds.
        • $𝒪(\log n)$ intermediate rounds since $\iO = Ω(i)$ actually guarantees that a constant fraction of the buckets in $I$ are cleared by the round.
        • $\tilde{𝒪}(1)$ late rounds since each clears at least one of the $\tilde{O}(1)$ remaining buckets from $I$.
    \end{itemize}
    The sum is $\tilde{𝒪}(1)$ rounds as claimed.
\end{proof}

\ifdefined\SIAMmacros
    \bibliographystyle{alpha}
\else
    \medskip
    \noindent
    {\bf Acknowledgement.} Jakob Bæk Tejs Houen and Rasmus Pagh are part of Basic Algorithm Research Copenhagen (BARC), supported by VILLUM Foundation grant 16582. Stefan Walzer is supported by DFG grant WA 5025/1-1.

    \bibliographystyle{plainurl}
\fi
\bibliography{bibliographie}

\end{document}